\newcommand{\itamar}[1]{}
\newcommand{\gabi}[1]{}
\journal{Computer Networks}
\algnewcommand{\AND}{\textbf{and}\xspace}
\algnewcommand{\OR}{\textbf{or}\xspace}
\def\blfootnote{\xdef\@thefnmark{}\@footnotetext}
\newcommand*{\algrule}[1][\algorithmicindent]{\makebox[#1][l]{\hspace*{.5em}\vrule height .75\baselineskip depth .25\baselineskip}}%
\def\ALG@printindent{%
    \ifnum \theALG@nested>0
        \ifx\ALG@text\ALG@x@notext
            \addvspace{-3pt}
        \else
            \unskip
            \ALG@printindent@tempcnta=1
            \loop
                \algrule[\csname ALG@ind@\the\ALG@printindent@tempcnta\endcsname]%
                \advance \ALG@printindent@tempcnta 1
            \ifnum \ALG@printindent@tempcnta<\numexpr\theALG@nested+1\relax
            \repeat
        \fi
    \fi
    }%
\patchcmd{\ALG@doentity}{\noindent\hskip\ALG@tlm}{\ALG@printindent}{}{\errmessage{failed to patch}}
\newtheorem{theorem}{Theorem}
\newtheorem{corollary}[theorem]{Corollary}
\newtheorem{lemma}[theorem]{Lemma}
\newtheorem{proposition}[theorem]{Proposition}
\newcommand{\set}[1]{\left\{#1\right\}}
\newcommand{\revchange}[1]{{\color{black} #1}}
\newcommand{\revchangetwo}[1]{{\color{black} #1}}
\DeclareMathOperator{\mSAM}{SA} 
\newcommand{\SAM}{$\mSAM$}
\DeclareMathOperator{\mSAMWP}{SA} 
\newcommand{\SAMWP}{$\mSAMWP$}
\DeclareMathOperator{\mSAO}{SA^*} 
\newcommand{\SAO}{$\mSAO$}
\DeclareMathOperator{\mSAOWP}{SA^*} 
\newcommand{\SAOWP}{$\mSAOWP$}
\newcommand{\alg}{Alg}
\DeclareMathOperator{\malg}{Alg} 
\newcommand{\opt}{OPT}
\newcommand{\makeroom}{MakeRoom()}
\newcommand{\subopt}{SubOPT}
\DeclareMathOperator{\msubopt}{SubOPT} 
\DeclareMathOperator{\FILL}{fill}
\DeclareMathOperator{\FLUSH}{flush}
\DeclareMathOperator{\W}{CW} 
\newcommand{\Cs}{G} 
\newcommand{\CsK}{G^K} 
\newcommand{\CsU}{G^U} 
\newcommand{\mFILL}{^{(\FILL)}}
\newcommand{\mFLUSH}{^{(\FLUSH)}}
\renewcommand{\W}{^{(W)}}
\renewcommand{\P}{^{(P)}}
\newcommand{\U}{^{(U)}}
\newcommand{\K}{^{(K)}}
\newcommand{\hfull}{Gfull}
\newcommand{\Autp} {A^{\U}_{(t_p)}}
\newcommand{\Au}{A^{\U}} 
\newcommand{\Ak}{A^{\K}} 
\newcommand{\malpha}{\alpha} 
\newcommand{\deltaW}{\delta_W} 
\newcommand{\deltaP}{\delta_V} 
\newcommand{\mW}{\ell_W} 
\newcommand{\mP}{\ell_V} 
\newcommand{\minWork}{W_0} 
\newcommand{\mbest}{(\minWork,V)} 
\newcommand{\mworst}{(W,1)} 
\newcommand{\algc}{\malg_c} 
\begin{document}

\begin{frontmatter}

\title{Queueing in the Mist:\\Buffering and Scheduling with Limited Knowledge}
\author{Itamar Cohen and Gabriel Scalosub}

\address{   Department of Communication Systems Engineering,
   Ben-Gurion University of the Negev,
   Beer-Sheva 84105, Israel\\
   Email:
   {\tt itamarq@post.bgu.ac.il, sgabriel@bgu.ac.il}
}

\begin{abstract}
Scheduling and managing queues with bounded buffers are among the most fundamental problems in computer
networking. Traditionally, it is often assumed that all the properties of each packet are known immediately upon arrival. However, as traffic becomes increasingly heterogeneous and complex, such assumptions are in many cases invalid. In particular, in various scenarios information about packet characteristics becomes available only after the packet has undergone some initial processing.
In this work, we study the problem of managing queues with limited knowledge. We start by showing lower bounds on the competitive ratio of any algorithm in such settings.
The techniques used in our proofs, which make use of a carefully crafted Markov process, may be of independent interest, and can potentially be used in other similar settings as well.
Next, we use the insight obtained from these bounds to identify several algorithmic concepts appropriate for the problem, and use these guidelines to design a concrete algorithmic framework. We analyze the performance of our proposed algorithm, and further show how it can be implemented in various settings, which differ by the type and nature of the unknown information. We further validate our results and algorithmic approach by an extensive simulation study that provides further insights as to our algorithmic design principles in face of limited knowledge.
\end{abstract}

\begin{keyword}
Buffer management
\sep queueing \sep scheduling \sep uncertainty \sep limited knowledge \sep competitive analysis \sep online algorithms
\end{keyword}

\end{frontmatter}


\section{Introduction}
\label{sec:introduction}
\blfootnote{An earlier version of this work was published in~\cite{Qmist_IWQoS}. This work adds full proofs of all theorems, stronger lower bounds, an improved competitive algorithm, and an extended simulation study.}

Some of the most basic tasks in computer networks involve scheduling and managing queues equipped with finite
buffers, where the primary goal in such settings is maximizing the throughput of the system.
The always-increasing heterogeneity and complexity of network traffic makes the challenge of maximizing the
throughput ever harder,
as the packet processing required in such queues
spans a plethora of tasks including
various forms of DPI, MPLS and VLAN tagging, encryption / decryption, compression / decompression, and more.

The most prevalent assumption in the research studying these problems is that the various properties of any packet -- e.g., its QoS characteristic, its required processing, its deadline -- are known upon its
arrival.
However, this assumption is in many cases unrealistic.
For instance, when a packet is recursively encapsulated a few times by MPLS, PBB, 802.1Q, GRE or IPSec, it is hard
to determine in advance the total number of processing cycles that such a packet would require~\cite{Folded,
Kangaroo}.
Furthermore, the QoS features of a packet are commonly determined by its flow ID, which is in many cases known only after parsing~\cite{Kangaroo}.

In data center network architectures such as PortLand~\cite{PortLand}, ingress switches query a cache for an
application-to-location address resolution. A cache miss, which is unpredictable by nature, results in forwarding of
the packet to the switch software or to a central controller, which performs a few additional processing cycles
before the packet can be transmitted.
Similarly, in the realm of Software Defined Networks, ingress switches query a cache for obtaining rules for a
packet~\cite{DIFANE}, which may also depend on priorities~\cite{ETHANE}. In such a case, a cache miss results in
additional processing until the rules are retrieved and the profit from the packet is known.

In spite of this increased heterogeneity, and the fact that the processing requirement of a packet might not be known in advance, these characteristics usually become known once some initial processing is performed. This behavior is common in many of the applications just described. Furthermore, for traffic corresponding to the same flow, it is common for characteristics to be unknown when the first few packets of the flow arrive at a network element, and once these properties are unraveled, they become known for all subsequent packets of this flow.
\revchange{It therefore follows that only part of the arriving packets has unknown characteristics upon arrival, which become known after parsing.}

In this work we address such scenarios where the characteristics of some arriving traffic are unknown upon arrival,
and are only revealed when a packet has undergone some initial processing
(parsing), ``causing the mist to clear''.
We model and analyze the performance of algorithms in such settings, and in particular we develop online scheduling and buffer management algorithms for the problem of maximizing the profit obtained from delivered packets, and
provide guarantees on their expected performance using competitive analysis.

We focus on the general case of heterogeneous processing requirements (work) and heterogeneous
profits~\cite{MultiV_MultiW}.
We assume priority queueing, where the exact priorities depend on the specifics of the model studied.
We present both algorithms and lower bounds for the problem of dealing with unknown characteristics in these models.
Furthermore, we highlight some design concepts for settings where algorithms have limited knowledge, which we
believe might be applicable to additional scenarios as well.

\begin{figure}[t]
\centering
\includegraphics[width=0.8\columnwidth]{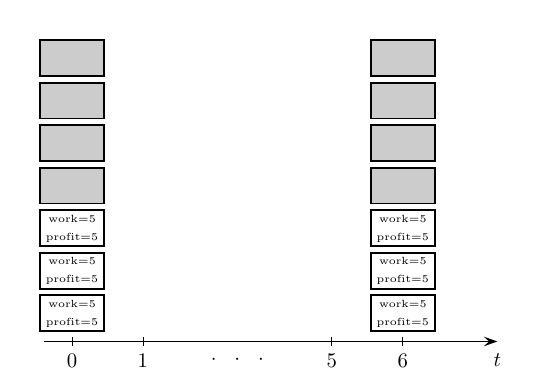}
\caption{An illustrative example of an arrival sequence with known and unknown packets}
\label{fig:toy_example}
\end{figure}

As an illustration of the problem, assume we have a 3-slots buffer, equipped with a single processor, and consider
the arrival sequence depicted in Fig.~\ref{fig:toy_example}.
In the first cycle we have seven unit-size packets arriving, out of which three will provide a profit of 5 upon successful delivery, each requiring 5 processing cycles (work).
The characteristics of these three packets are known immediately upon arrival.
\revchangetwo{The} characteristics of the remaining four packets (marked gray) are \emph{unknown} upon arrival. We therefore dub
such packets $U$-packets (i.e., unknown packets). Each of these four $U$-packets may turn out to be either a "best"
packet, requiring minimal work and having maximal profit; a "worst" packet, requiring maximal work and having
minimal profit; or anything in between. Thus, already at the very beginning of this simple scenario, any buffering
algorithm would encounter an \emph{admission control} dilemma: how many $U$-packets to accept, if any?
This dilemma can be addressed by various approaches including, e.g., allocating some buffer space for $U$-packets,
accepting $U$-packets only when current known packets in the buffer are of poor characteristics, in terms of profit,
or of profit to work ratio, etc. In case that the algorithm accepts $U$-packets, an additional question arises:
which of the $U$-packets to accept into the buffer? Obviously, for any online deterministic algorithm there exists a
simple adversarial scenario, which would cause it to accept only the "worst" $U$-packets (namely, packets with
maximal work and minimum profit), while an optimal offline algorithm would accept the best packets. This motivates
our decision to focus our attention on randomized algorithms.

We now turn to consider another aspect of handling traffic with some unknown characteristics. Assume the scenario continues with 5 cycles without any arrival, and then a cycle with an identical arrival pattern - namely, three known packets with both work and profit of 5 per packet, and four $U$-packets. This
sheds light on a \emph{scheduling} dilemma: which of the accepted packets should better be processed first? every
scheduling policy impacts the buffer space available in the next burst. For instance, a run-to-completion attitude
would enable finishing the processing of one known packet by the next burst, thus allowing space for accepting a new
packet without preemption. However, one may consider an opposite attitude - namely, parsing as many $U$-packets as
possible, thus "causing the mist to clear",
allowing more educated decisions, once there are new arrivals. In terms
of priority queuing, this means over-prioritizing some
 $U$-packets, and allowing them to be parsed immediately upon arrival.
We further develop appropriate algorithmic concepts based on the insights from this illustrative example in Section \ref{sec:algorithmic_conecpts}.

\subsection{System Model}
\label{sec:model}
Our system model consists of four main modules, namely, \begin{inparaenum}[(a)]
\item an input queue equipped with a finite buffer,
\item a buffer management module which performs admission control
\item a scheduler module which decides which of the pending packets should be processed, and
\item a processing element (PE), which performs the processing of a packet.
\end{inparaenum}

We divide time into discrete cycles,
\revchange{where each cycle represents a fixed time slot,}
 and consists of three steps:
\begin{inparaenum}[(i)]
\item The {\em transmission} step, in which fully-processed packets leave the queue,
\item the {\em arrival} step, in which new packets may arrive, and the buffer management module decides which of them should be retained in the queue, and which of the currently buffered packets should be pushed-out and dropped, and finally
\item the {\em processing} step, in which the scheduler assigns a single packet for processing by the PE, which in turn processes the packet.
\end{inparaenum}

We consider a sequence of unit-size packets arriving at the queue. Upon its arrival, the characteristic of each packet may be \emph{known} - in which case we refer to the packet as a {\em $K$-packet} (i.e., known packet); or \emph{unknown}  - in which case we refer to the packet as a {\em $U$-packet} (i.e., unknown packets). We let $M$ denote the maximum number of $U$-packets that may arrive in any single cycle.
We focus our attention on the case where $M>0$, unless specifically stated otherwise.

Each arriving packet $p$ has some
\revchangetwo{
\begin{inparaenum} [(1)]
\item intrinsic benefit (\emph {profit})
$v(p) \in \set{1,\dots,V}$, and
\item required number of processing cycles (\emph {work}),
$w(p) \in \set{\minWork, \minWork+1, \dots ,W}$. Unless explicitly stated otherwise, we consider the most general case, namely, $\minWork = 1$.
\end{inparaenum}
}
To simplify the expressions throughout the paper, we assume that both $V$ and $W$ are powers of 2.\footnote{Our results degrade by a mere constant factor otherwise.}
We use the notation {\em $(w,v)$-packet} to denote a packet with work $w$ and profit $v$.
We note that the {\em uniform} case where all packets require the same amount of work, and all packets have the same profit, is trivial, since the simple run-to-completion policy is optimal. We therefore focus our attention on non-uniform traffic.

In our model, similarly to~\cite{Shpiner}, upon processing a $U$-packet for the first time, its properties become known. We therefore refer to such a first processing cycle of a $U$-packet as a {\em parsing cycle}. Non-parsing cycles where the processor is not idle are referred to as {\em work cycles}.

The queue buffer can contain at most $B$ packets.
We assume $B \geq 2$, since the case where $B = 1$ is degenerate. The {\em head-of-line} (HoL) packet at time $t$
(for a given algorithm \alg) is the highest priority packet stored in the buffer just prior to the processing step
of cycle $t$, namely, the packet to be scheduled for processing in the processing step of $t$.
We say the buffer is {\em empty} at cycle $t$ if there are no packets in the buffer after the transmission step of cycle $t$.

We study {\em queue management} algorithms, which are responsible for both the buffer management
and the scheduling of packets for processing. In particular, we focus our attention on algorithms targeted at
maximizing the {\em throughput} of the queue, i.e. the overall profit from all packets successfully transmitted out
of the queue.
\revchange{The throughput of algorithm \alg\ is denoted by $TP(\malg)$. We use the terms throughput and performance interchangeably.}

We evaluate the performance of online algorithms using competitive analysis~\cite{Amortized, El-Yaniv}. An algorithm
\alg\ is said to be $c$-competitive if for every finite input sequence $\sigma$, the throughput of {\em any}
algorithm for this sequence is at most $c$ times the throughput of \alg\ ($c \geq 1$).
We let \opt\ denote any (possibly clairvoyant) algorithm attaining optimal throughput.
An algorithm is said to be {\em greedy} if it accepts packets as long as there is available buffer space. We further
focus our attention on {\em work-conserving}
algorithms, i.e., algorithms which never leave the PE idle unnecessarily.

\subsection{Related Work}
Competitive algorithms for scheduling and management of bounded buffers have been extensively studied for the past two decades. The problem was first introduced in the context of differentiated services, where packets have uniform size and processing requirements, but some of the packets have higher priorities,
represented by a higher profit associated with them~\cite{Rosen, OF, Smoothing}. The numerous variants of this problem include models where packets have deadlines or maximum lifetime in the switch~\cite{OF}, environments involving multi-queues~\cite{Albers2005, Azar&Richter2004, Buf_Xbar_Kogan, Buf_Xbar_Kanizo} and cases with packets dependencies~\cite{Dependencies, OF_sizes}, to name but a few.
An extensive survey of these models and their analysis can be found in~\cite{Goldwasser}.

While traditionally it was assumed that packets have heterogeneous profits but uniform work (processing
requirements), some recent work introduced the complementary problem, of uniform profits with heterogeneous
work~\cite{Multipass}. This work presented an optimal algorithm for the fundamental problem, as well as
online algorithms and bounds on the competitive ratio for numerous variants. Subsequent research investigated related
problems with heterogeneous work combined with heterogeneous packet sizes~\cite{Sizes}, or with heterogeneous
profits~\cite{MultiV_MultiW, Azar2015}.
In particular,~\cite{MultiV_MultiW} showed that the competitive ratio of some straight-forward deterministic algorithms for the problem of heterogeneous work combined with heterogeneous profits is linear in either the maximal work $W$, or in the maximal profit $V$, even when the characteristics of all packets are known upon arrival. These results motivate our focus on randomized algorithms.

While most of the literature above assumed that all the characteristics of packets are known upon arrival, this
assumption was put in question recently~\cite{Shpiner} by noting that it is often invalid.
However, the main problem addressed in~\cite{Shpiner} revolved around developing schemes for transmitting packets of the same flow in-order, while our work focuses on maximizing throughput with limited buffering resources, and designing both buffer management and scheduling policies targeted at this objective.

Maybe closest to our work are the recent studies considering serving in the dark~\cite{Dark,Dark2}, which investigate an extreme case where the online algorithm learns the profit from a packet only after transmitting it. These studies
consider highly oblivious algorithms, whereas our model and our proposed algorithms dwell in a middle-ground between
the well studied models with complete information, and these recent oblivious settings. Our work further considers
traffic with variable processing requirements, whereas~\cite{Dark,Dark2} focus on settings where all packets require
only a single processing cycle, and they differ only by their profit.

The problem of optimal buffering of packets with variable work is closely related to the problem of job scheduling
in a multi-threaded processor, which was extensively studied. A comprehensive survey of online algorithms for this
problem can be found in~\cite{Job}. This body of work, however, differs significantly from our currently studied
model.
The major differences are that packet buffering has to deal with limited buffering capabilities, and is targeted at
maximizing throughput. Processor job scheduling, however, usually has no strict buffering limitations, and is mostly
concerned with minimizing the response time.

\subsection{Our Contribution}
We introduce the problem of buffering and scheduling which aims to maximize throughput where the characteristics of
some of the packets are unknown upon arrival. We focus our attention on traffic where every packet has some required
processing cycles, and some profit associated with successfully transmitting it.
\revchange{We make no assumption on the underlying process generating traffic, thus rendering our results globally applicable.}

In Section~\ref{sec:lower_bounds} we present lower bounds on the performance of any randomized algorithm for the
problem. Specifically, we show that no algorithm can have a competitive ratio better than $\Omega(\min\set{WV,M})$, even against an adversary which can accommodate merely 2 packets in its buffer,
\revchangetwo{where $W$ and $V$ denote the maximum work and profit of a packet, respectively, and $M$ represents the maximum number of unknown packets which may arrive in any single cycle.}
We also prove stronger lower bounds for the general settings using a novel technique, in which we bound the expected number of packets in the buffer of an optimal offline algorithm by means of a Markov process.

In Section~\ref{sec:algorithmic_conecpts} We describe several algorithmic concepts tailored for dealing with unknown characteristics in such systems.
We follow by presenting an algorithm that applies our suggested algorithmic concepts in Section~\ref{sec:algorithms}.
For the most general case we prove our algorithm has a competitive ratio of $O(M \log V \log W)$. We further show how to improve this bound in several important special cases.

In Sections~\ref{sec:improved_algorithms}-\ref{sec:Practical_Implementation}
we present some modifications and heuristics applicable to our algorithm that, while leaving the worst-case guarantees intact, are designed to improve performance compared to the baseline algorithmic design.
The modified algorithm can cope with cases where neither the maximal amount of work and profit, nor the maximum number of unknown packets per cycle, are known in advance.

We further validate and evaluate the performance of our proposed algorithms in Section~\ref{sec:simulations} via an
extensive simulation study. Our results highlight the effect the various parameters have on the problem, well beyond
the insights arising from our rigorous mathematical analysis.

We conclude in Section~\ref{sec:conclusions} with a discussion of our results, and also highlight several
interesting open questions.

\section{Lower Bounds}
\label{sec:lower_bounds}
In this section we present lower bounds on the competitive ratio of any randomized algorithm for our problem.
\revchange{
These lower bounds serve two main objectives:
\begin{inparaenum}[(i)]
\item They represent the best competitive ratio which one can hope to achieve; and
\item the hard scenarios used in the proofs of these lower bounds highlight the challenges which any competitive online algorithm would have to tackle.
\end{inparaenum}
}

\subsection{Highly-restricted adversaries}\label{sec:lower_bounds_restricted}
\revchange{
In this section we prove lower bounds on the competitive ratio of any online algorithm for our problem, compared to a highly-restricted adversary which uses a buffer which can only store a single packet. This restriction on the amount of buffer space available for the adversary enables us to better highlight the scaling laws of the problem, depending on the various parameters.
}

\begin{theorem} \label{Theorem_WP_rand}
If $V \geq 1$, $M \geq 1$ and the work of each packet is $w(p) \in \set{\minWork, \minWork+1, \dots, W}$ where $W \geq 2$,
then the competitive ratio of any randomized algorithm  for non-uniform traffic is at least
$$
\frac{V (W-1)}{2\minWork} \left[1 - \left(1- \frac{1} {V(W-1)+1-\minWork}\right)^{M\minWork} \right],
$$

even against an optimal offline algorithm which has a buffer which can only store a single packet.
\end{theorem}

\begin{proof}
Since traffic is non uniform, we are guaranteed to have $V(W-1) + 1 - \minWork \neq 0$.
\revchange{
We prove the theorem using Yao's method~\cite{yao77probabilistic}, where we define a carefully crafted distribution over arrival sequences, and show a lower bound on the ratio between the expected performance of an optimal clairvoyant algorithm for the problem, and the expected performance of any {\em deterministic} algorithm for the problem.
}
We will show that the claim is true even if the optimal offline algorithm uses a buffer that can hold only a single packet.
We define the following collection of arrival sequences, where each arrival sequence has two phases: a {\em Fill phase}, and a {\em Flush phase}.
The Fill phase consists of \emph{iterations} as follows. Each iteration begins with $\minWork$ cycles without arrivals; and continues with $\minWork$ cycles with $M$ $U$-packets arriving per cycle, where each packet is a $\mbest$-packet with probability $p$, and a $\mworst$-packet with probability $(1-p)$, for some constant $p$ to be determined later. The total number of cycles during the fill phase is $N$, where $N$ is a large integer, so we have $\frac{N}{2 \minWork}$ iterations.
Once the fill phase ends, it is followed by the Flush phase, which consists of $BW$ cycles without arrivals.
We note that due to the random choices of packets being either $\mbest$-packets or $\mworst$-packets, the above structure induces a distribution over a collection of possible arrival sequences.

To simplify our analysis, we define the \subopt\ policy, which works as follows: Within the fill phase, during each iteration, \subopt\ accepts at most one $\mbest$-packet which has arrived during the iteration, if such a packet exists. This packet is the one considered {\em picked} by \subopt\ in that iteration.
Starting from the second iteration, during the first $\minWork$ cycles of each iteration, \subopt\ processes the packet it picked during the previous iteration (if such a packet exists), and transmits it.
During the flush phase, \subopt\ processes and finally transmits the packet it picked during the last iteration.

It should be noted that \subopt\ is neither greedy, nor work conserving.
Moreover, the expected throughput of \subopt\ clearly serves as a lower bound on the expected optimal throughput
possible.

We have $\frac{N}{2\minWork}$ iterations, and the probability that \subopt\ successfully picks a $\mbest$-packet during an iteration is exactly the probability of there being a $\mbest$-packet arriving during that iteration, which is $1 - (1-p)^{M\minWork}$.
\revchange{The throughput of \subopt, which we recall is denoted by $TP(\msubopt)$, therefore satisfies}
\begin{equation} \label{TP_Opt_rand_WP}
TP(\msubopt) \geq \frac{NV}{2\minWork}[1 - (1-p)^{M\minWork}]
\end{equation}

We now turn to consider the expected performance of any deterministic algorithm \alg\ for the problem.
We first assume that \alg\ begins the flush phase with a buffer
full of $\mbest$-packets,
all of them unparsed. This provides \alg\ with a profit of $BV$ during the flush phase, while
still having $N$ processing cycles during the fill phase for processing additional packets. This
profit is clearly an upper bound on the maximum possible throughput attainable by \alg\ from packets transmitted during the flush phase, regardless of when they were processed.
For evaluating the gain of \alg\ during the fill phase, it therefore suffices to consider only packets which \alg\ fully processes during this phase.

Consider now the profit of \alg\ from packets transmitted during the fill phase.
Recall that we assume that \alg\ is work-conserving. We assume that \alg\ is also greedy, that is, \alg\ never discards a packet when its buffer is not full; being greedy cannot decrease \alg's performance. \alg\ has packets to process during the entire fill phase, except for the first $\minWork$ cycles (where there are no arrivals yet), namely, for $N' = N - \minWork$ cycles.
Furthermore, since \alg\ is assumed to always accept packets when the buffer is not full, and is work conserving, there exists some $0 < r \leq 1$
such that the number of parsing, and work, cycles performed by \alg\ are $N'r$, and $N'(1-r)$, respectively.

Consider a case where \alg\ reveals a $\mbest$-packet $q$. Then, processing $q$ and finally transmitting it would surely
not decrease the throughput of \alg\ when contrasted with the alternative of dropping $q$.
Thus, the best deterministic algorithm \alg\ would work at least $\minWork-1$ work cycles per each parsing cycle, in which
a $\mbest$-packet is parsed
(recall that we are merely interested in packets, which \alg\ fully processes and transmits during the fill phase).
Therefore, the total number of work cycles contributing to the transmission of such packets is at least $\minWork-1$ times larger then the expected number of parsing
cycles, in which a $\mbest$-packet is revealed: $N'(1-r) \geq N'rp(\minWork-1)$.

If the total number of work cycles during the fill phase exceeds the number of cycles which are necessary for
transmitting all the parsed $\mbest$-packets, \alg\ may work also on $\mworst$-packets. Namely, if $N'(1-r) > N'rp(\minWork-1)$,
then \alg\ may work on $\mworst$-packets for
$N'(1-r) - N'rp(\minWork-1)$ cycles, transmitting at most one $\mworst$-packet once in $W-1$ such cycles.

Combining the above reasoning we conclude that the overall throughput of \alg\ satisfies
\begin{equation}\label{TP_alg_lower_b_WP}
\begin{split}
TP(\malg) & \leq N'rpV + \frac {N'(1-r) - N'rp(\minWork-1)} {W-1} + BV \\
& = (N - \minWork) \left[Vrp + \frac {(1-r) - rp(\minWork-1)} {W-1}\right] + BV
\end{split}
\end{equation}

Considering the ratio between the lower bound on the expected performance of \subopt\ (as captured by Eq.~\ref{TP_Opt_rand_WP}) and the upper bound on the expected performance of \alg\ (as captured by Eq.~\ref{TP_alg_lower_b_WP}) and letting $N \to \infty$, we conclude that no algorithm can have a competitive ratio better than
$$
\frac{V(W-1)}{2\minWork} \cdot \frac{1 - (1-p)^{M\minWork}}{Vrp(W-1)+ 1-r -rp(\minWork-1)}
$$
By choosing $p^* = \left[ V(W-1)+1-\minWork \right]^{-1}$, the result follows.
\end{proof}

We now aim to relate the lower bound established in Theorem~\ref{Theorem_WP_rand} to a simpler and more intuitive function of $M, V$ and $W$. We do so by means of two propositions, which relate the bound to either $\Omega(M)$ or $\Omega(VW)$ for different ranges of $M$. In the propositions we use our notation $p^* = \left[ V(W-1)+1-\minWork \right]^{-1}$ from the proof of Theorem~\ref{Theorem_WP_rand}.
\revchange{
Using this notation, note that Theorem~\ref{Theorem_WP_rand} shows that the competitive ratio is at least
$$
\frac{V(W-1)}{2\minWork} \left[1 - \left(1- p^* \right)^{M\minWork} \right].
$$

In the proofs of both propositions we will repeatedly use the following simple inequality, which holds for any $\minWork \geq 1$:\emph{•}
\begin{equation} \label{simple_inequality_p_star}
\frac{1}{V(W-1)} = \frac{1}{\frac{1}{p^*} + \minWork - 1} \leq p^*.
\end{equation}
}

The following proposition shows that if $M$ is relatively small, then the lower bound established in Theorem~\ref{Theorem_WP_rand} is $\Omega (M)$.
\revchange{
\begin{proposition} \label{prop:M_over_2_VW}
If $V \geq 1, \minWork \geq 1, W \geq 2$ and $1 \leq M \leq \frac{V(W-1)}{\minWork}$, then
\begin{equation}
\frac{V(W-1)}{2\minWork} \left[1 - \left(1- p^* \right)^{M\minWork} \right] \geq \frac{M}{4} \notag
\end{equation}
\end{proposition}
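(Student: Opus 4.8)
The plan is to lower-bound the bracketed factor $1-(1-p^*)^{Mw}$ by a simple rational function of $Mwp^*$, and then reduce the target inequality to an elementary linear condition that the hypothesis on $M$ supplies exactly. Since $p^*\in(0,1]$ (this is precisely what makes $p^*$ a legitimate probability in the construction of Theorem~\ref{Theorem_WP_rand}, i.e.\ $V(W-1)\ge w$) and $Mw$ is a positive integer, I would first invoke the estimate $1+t\le e^t$ twice. Applying it as $1-p^*\le e^{-p^*}$ and raising to the power $Mw$ (legitimate since the base is nonnegative) gives $(1-p^*)^{Mw}\le e^{-Mwp^*}$; applying it as $e^{Mwp^*}\ge 1+Mwp^*$ gives $e^{-Mwp^*}\le (1+Mwp^*)^{-1}$. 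Chaining the two yields the clean bound $1-(1-p^*)^{Mw}\ge \frac{Mwp^*}{1+Mwp^*}$.

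Substituting this into the left-hand side and clearing $p^*$ via $1/p^*=V(W-1)+1-w$, the left-hand side is seen to be at least $\frac{V(W-1)\,M}{V(W-1)+1-w+Mw}$. The denominator here equals $V(W-1)+1+(M-1)w$, which is positive (indeed $\ge 2$) under $V\ge 1$, $W\ge 2$, $M\ge 1$, $w\ge 1$. Hence, dividing the desired inequality by $M>0$ reduces it to $\frac{V(W-1)}{V(W-1)+1-w+Mw}\ge \frac12$, and cross-multiplying collapses this to the transparent linear condition $V(W-1)\ge 1+(M-1)w$.

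The last step is to recognize that the hypothesis is exactly this condition. Rewriting the upper bound on $M$ with $1/p^*=V(W-1)+1-w$ shows that $\frac{1}{p^*w}+1-\frac1w$ simplifies to $\frac{V(W-1)}{w}$; meanwhile the linear condition $V(W-1)\ge 1+(M-1)w$ is equivalent to $M\le \frac{V(W-1)}{w}+1-\frac1w$. Since $1-\frac1w\ge 0$ for $w\ge 1$, the assumed bound $M\le \frac{V(W-1)}{w}$ implies the needed $M\le \frac{V(W-1)}{w}+1-\frac1w$, which closes the argument.

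I expect the only delicate points to be bookkeeping rather than conceptual: ensuring $p^*\le 1$ so the exponential estimate may be applied to a nonnegative base raised to the integer power $Mw$, and carefully simplifying the $p^*$-dependent bound on $M$ into the clean form $\frac{V(W-1)}{w}$. Everything else is the standard $1+t\le e^t$ trick together with linear algebra, so I anticipate no substantive obstacle beyond performing this substitution accurately.
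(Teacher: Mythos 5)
Your proposal is correct, but it follows a genuinely different route from the paper's proof. The paper performs the same algebraic reduction (via $1/p^*=V(W-1)+1-w$) to the equivalent inequality $(1-p^*)^n \leq 1-\frac{n}{2\left(\frac{1}{p^*}+w-1\right)}$ with $n=Mw$, and then establishes this by induction on $n$, the hypothesis on $M$ being invoked only in the last step of the induction. You avoid induction altogether: chaining $(1-p^*)^{Mw}\leq e^{-Mwp^*}\leq(1+Mwp^*)^{-1}$ yields the closed-form lower bound $1-(1-p^*)^{Mw}\geq\frac{Mwp^*}{1+Mwp^*}$, which collapses the claim to the linear condition $V(W-1)\geq 1+(M-1)w$; this is indeed implied by the hypothesis $M\leq\frac{1}{p^*w}+1-\frac{1}{w}=\frac{V(W-1)}{w}$, since $1-\frac{1}{w}\geq 0$. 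What each approach buys: yours is shorter, makes the role of the hypothesis on $M$ completely transparent, and in fact proves the statement on the marginally wider range $M\leq\frac{V(W-1)}{w}+1-\frac{1}{w}$; the paper's induction is more elementary in that it never invokes the exponential function. Both arguments rest on the same implicit prerequisite $0<p^*\leq 1$ (equivalently $V(W-1)\geq w$, inherited from the construction in Theorem~\ref{Theorem_WP_rand}), which you at least state explicitly --- the paper needs it as well, since its induction step multiplies the induction hypothesis by $(1-p^*)$ and thus requires this factor to be nonnegative.
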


\begin{proof}
We show by induction on $n$ that for any $1 \leq n \leq V(W-1)$
\begin{equation}
\label{eq:induction_claim}
(1-p^*)^n \leq 1 - \frac{n}{2V(W-1)}.
\end{equation}
By setting $n=M \cdot \minWork$, which is at most $V(W-1)$ by our assumption on $M$, and applying some algebraic manipulation, the result follows.

For $n=1$, Eq.~\ref{eq:induction_claim} reduces to requiring that $\frac{1}{2V(W-1)} \leq p^*$, which holds true due to Eq.~\ref{simple_inequality_p_star}.
For the induction step, by the
induction hypothesis on $n$ we have
$$
\left(1-p^*\right)^{n+1}
\leq
\left(1 - p^*\right) \left[  1 - \frac{n}{2 V(W-1)} \right].
$$
It therefore suffices to prove that
$$
\left(1 - p^*\right) \left[  1 - \frac{n}{2 V(W-1)} \right]
\leq
1 - \frac{n+1}{2 V(W-1)},
$$
which is equivalent to requiring that
$$
\frac {1} {2 V(W-1)}
\leq
p^* \left[ 1 - \frac{n}{2 V(W-1)} \right].
$$
By Eq.~\ref{simple_inequality_p_star} we have
$\frac{1}{2 V(W-1)} \leq \frac{p^*}{2}$, which implies that it suffices to show that
$$
\frac{p^*}{2}
\leq
p^* \left[ 1 - \frac{n}{2 V(W-1)} \right]
$$
which is satisfied for every $n \leq V(W-1)$.
\end{proof}
}

The following proposition shows that if $M$ is relatively large, then the lower bound established in Theorem~\ref{Theorem_WP_rand} is $\Omega \left(\frac{VW}{\minWork}\right)$.

\revchange{
\begin{proposition}\label{prop:omegaVW}
If $V \geq 1, \minWork \geq 1, W \geq 2$ and $M > \frac{V(W-1)}{\minWork}$, then
\begin{equation}
\frac{V(W-1)}{2\minWork} \left[1 - \left(1- p^* \right)^{M\minWork} \right]
>
\frac{e-1}{4e} \cdot \frac{VW}{\minWork} \notag
\end{equation}
\end{proposition}

\begin{proof}
By our assumption on $M$, and using Eq.~\ref{simple_inequality_p_star}, we have $M \cdot \minWork > V(W-1) \geq \frac{1}{p^*}$.
It follows that
$M \cdot \minWork = a \frac{1}{p^*}$ for some $a >1$, which in turn implies that
$$
\left(1 - p^* \right)^{M \cdot \minWork} = \left[ \left(1 - p^*\right)^{\frac{1}{p^*}}\right]^a \leq e^{-a} <  e^{-1}.
$$
It follows that
\begin{align*}
\frac{V(W-1)}{2\minWork} \left[1 - \left(1- p^* \right)^{M \cdot \minWork} \right]
&\geq \frac{V(W-1)}{2\minWork}  \left(1 - \frac{1}{e}\right) \\ \notag
&= \frac{VW}{2\minWork} \cdot \frac{W-1}{W} \left(1 - \frac{1}{e}\right) \\ \notag
&\geq \frac{e-1}{4e} \frac{VW}{\minWork} \notag
\end{align*}
\end{proof}
}

Assigning $\minWork=1$ in Theorem \ref{Theorem_WP_rand} and Propositions \ref{prop:M_over_2_VW} and \ref{prop:omegaVW} implies the following corollary:
\begin{corollary}\label{cor:WP_ran}
The competitive ratio of any randomized algorithm is $\Omega(\min\set{VW,M})$.
\end{corollary}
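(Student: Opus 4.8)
The plan is to instantiate Theorem~\ref{Theorem_WP_rand} together with Propositions~\ref{prop:M_over_2_VW} and~\ref{prop:omegaVW} at $w=1$, and then verify that the two regimes these propositions cover combine to give the claimed bound. First I would substitute $w=1$ into the definition $p^* = \left[V(W-1)+1-w\right]^{-1}$, which collapses to $p^* = \left[V(W-1)\right]^{-1}$, and correspondingly the threshold $\frac{1}{p^*w}+1-\frac{1}{w}$ appearing in both propositions simplifies to $\frac{1}{p^*} = V(W-1)$. Thus the regime split of the two propositions becomes simply $1 \leq M \leq V(W-1)$ versus $M > V(W-1)$, and the bound of Theorem~\ref{Theorem_WP_rand} reduces to $V(W-1)\left[1-(1-p^*)^{M}\right]$.

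In the first regime, Proposition~\ref{prop:M_over_2_VW} gives that this bound is at least $\frac{M}{2}$. Since here $M \leq V(W-1) < VW$, we have $\min\set{VW,M} = M$, so the competitive ratio is at least $\frac{1}{2}\min\set{VW,M}$. In the second regime, Proposition~\ref{prop:omegaVW} gives that the same bound exceeds $\frac{e-1}{2e}\,VW$; since $VW \geq \min\set{VW,M}$ trivially, the competitive ratio exceeds $\frac{e-1}{2e}\min\set{VW,M}$.

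Combining the two cases, in every regime the competitive ratio is at least a fixed constant --- namely $\min\set{\tfrac{1}{2},\tfrac{e-1}{2e}} = \tfrac{e-1}{2e}$ --- times $\min\set{VW,M}$, which is exactly $\Omega(\min\set{VW,M})$ as claimed.

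The only point requiring care, and the closest thing to an obstacle, is confirming that the hypotheses of the two propositions tile the whole range $M \geq 1$ with no gap once $w=1$ is fixed, and that $V(W-1) = \Theta(VW)$ for $W \geq 2$ so that the switching threshold is genuinely of order $VW$. This last fact (using $W-1 \geq W/2$ for $W \geq 2$) is already what drives the final inequality in Proposition~\ref{prop:omegaVW}, and it guarantees that the two constant factors from the two regimes refer to the same quantity $\min\set{VW,M}$ up to constants, so that the union of the two bounds is a single $\Omega(\min\set{VW,M})$ statement.
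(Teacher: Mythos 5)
Your proposal is correct and matches the paper's own argument exactly: the paper derives Corollary~\ref{cor:WP_ran} precisely by assigning $w=1$ in Theorem~\ref{Theorem_WP_rand} and Propositions~\ref{prop:M_over_2_VW} and~\ref{prop:omegaVW}, which is what you do, with the two regimes $1 \leq M \leq V(W-1)$ and $M > V(W-1)$ tiling all of $M \geq 1$. You have simply spelled out the constants ($\tfrac{1}{2}$ and $\tfrac{e-1}{2e}$) and the regime-matching that the paper leaves implicit, so there is nothing to correct.
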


In the special case of uniform-profits, we are essentially interested in maximizing the overall {\em number} of packets successfully transmitted. Therefore we may assign $V=1$ in Corollary \ref{cor:WP_ran}, implying the following corollary:

\begin{corollary}\label{cor:W_rand}
In the case of uniform-profits, the competitive ratio of any randomized algorithm is $\Omega(\min\set{W,M})$.
\end{corollary}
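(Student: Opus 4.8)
The plan is to derive this result as an immediate specialization of Corollary~\ref{cor:WP_ran}. First I would observe that the uniform-profits problem---where the objective reduces to maximizing the overall \emph{number} of successfully transmitted packets---is precisely the instance of the general model obtained by forcing every packet to carry the same profit value. Normalizing this common profit to $1$ amounts to setting $V=1$, so the general lower bound should apply verbatim, provided we confirm that the hard distribution underlying Theorem~\ref{Theorem_WP_rand} remains a legitimate uniform-profits instance in this regime.

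Next I would verify this legitimacy by inspecting the adversarial construction. There, every arriving $U$-packet is drawn as either a $\mbest$-packet or a $\mworst$-packet. Substituting $V=1$, these become $(w,1)$- and $(W,1)$-packets respectively, both carrying profit exactly $1$, so the entire arrival sequence consists solely of unit-profit packets. The only characteristic that remains hidden upon arrival is therefore the \emph{work}, which is exactly the feature left unknown in the uniform-profits setting. Hence the distribution is a valid input for that problem, Yao's argument carries over unchanged, and the lower bound it yields is genuine for uniform profits.

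Finally I would simply substitute $V=1$ into the bound $\Omega(\min\set{VW,M})$ of Corollary~\ref{cor:WP_ran}, which collapses to $\Omega(\min\set{W,M})$, as claimed.

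Since the heavy lifting was already carried out in Theorem~\ref{Theorem_WP_rand} together with Propositions~\ref{prop:M_over_2_VW} and~\ref{prop:omegaVW}, I do not anticipate any genuine obstacle. The only point requiring care is the sanity check of the second paragraph: that collapsing the profit range to a single value does not inadvertently hand the online algorithm extra knowledge capable of weakening the bound. As argued, it does not---eliminating profit uncertainty leaves the work uncertainty of the $U$-packets fully intact---so the specialization is sound and the corollary follows.
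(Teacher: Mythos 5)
Your proposal is correct and follows essentially the same route as the paper, which likewise obtains this corollary by substituting $V=1$ into Corollary~\ref{cor:WP_ran}. Your additional sanity check---that with $V=1$ the adversarial distribution consists solely of $(w,1)$- and $(W,1)$-packets and is therefore a legitimate uniform-profits instance---is a point the paper leaves implicit, but it does not change the argument.
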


In the special case of uniform-work, we can assign $\minWork = W$ in Propositions \ref{prop:M_over_2_VW} and \ref{prop:omegaVW}, implying the following corollary:
\begin{corollary}\label{cor:P_rand}
In the case of uniform-work, the competitive ratio of any randomized algorithm is $\Omega(\min\set{V,M})$.
\end{corollary}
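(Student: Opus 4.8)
The plan is to mirror the derivation of Corollary~\ref{cor:WP_ran}, substituting $w = W$ rather than $w = 1$ into Theorem~\ref{Theorem_WP_rand} and Propositions~\ref{prop:M_over_2_VW} and~\ref{prop:omegaVW}. Uniform work means every packet requires exactly $W$ processing cycles, so the work range $\set{w, w+1, \dots, W}$ collapses to the single value $w = W$, and the hypotheses of all three earlier results remain satisfied: we still have $V \geq 1$ and $w = W \geq 2$. The case $V = 1$ is trivial, since then $\min\set{V, M} = 1$ and the claimed bound $\Omega(1)$ holds vacuously, so I would assume $V \geq 2$ from here on.

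First I would compute the relevant quantities under this substitution. With $w = W$, the probability from the proof of Theorem~\ref{Theorem_WP_rand} becomes $p^* = \left[ V(W-1) + 1 - W \right]^{-1} = \left[ (V-1)(W-1) \right]^{-1}$, using the factorization $V(W-1) + 1 - W = (V-1)(W-1)$, which is well-defined and positive precisely because $V \geq 2$ and $W \geq 2$. The threshold separating the two regimes in the propositions, namely $\frac{1}{p^* w} + 1 - \frac{1}{w}$, simplifies (again with $w = W$) to $\frac{(V-1)(W-1)}{W} + 1 - \frac{1}{W} = \frac{V(W-1)}{W}$, which is $\Theta(V)$.

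Next I would split on the value of $M$ relative to this threshold. In the small-$M$ regime, where $1 \leq M \leq \frac{V(W-1)}{W}$, Proposition~\ref{prop:M_over_2_VW} directly yields that the lower bound of Theorem~\ref{Theorem_WP_rand} is at least $\frac{M}{2} = \Omega(M)$. In the large-$M$ regime, where $M > \frac{V(W-1)}{W}$, Proposition~\ref{prop:omegaVW} yields that the same lower bound exceeds $\frac{e-1}{2e} \cdot \frac{VW}{w} = \frac{e-1}{2e} V = \Omega(V)$, since $w = W$ cancels the $W$ in the numerator.

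Finally I would stitch the two cases together. In the small-$M$ regime $M \leq \Theta(V)$, so $\min\set{V, M} = \Theta(M)$, matching the $\Omega(M)$ bound; in the large-$M$ regime $M > \Theta(V)$, so $\min\set{V, M} = \Theta(V)$, matching the $\Omega(V)$ bound. Either way the lower bound is $\Omega(\min\set{V, M})$, which is the claim. The only point requiring genuine care is the threshold arithmetic — verifying that the cutoff $\frac{V(W-1)}{W}$ is indeed $\Theta(V)$, so that the two regimes abut cleanly and combine into a single $\min$; everything else is a direct appeal to the two propositions.
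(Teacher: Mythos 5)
Your proposal is correct and follows exactly the paper's route: the paper proves this corollary by the one-line observation that one may assign $w = W$ in Propositions~\ref{prop:M_over_2_VW} and~\ref{prop:omegaVW}, which is precisely what you do, with the two regimes of $M$ combined into $\Omega(\min\set{V,M})$. Your additional care is welcome but not a departure — in particular, separating out $V=1$ (where $p^* = \left[(V-1)(W-1)\right]^{-1}$ would be undefined) and verifying that the threshold $\frac{V(W-1)}{W}$ is $\Theta(V)$ are details the paper leaves implicit.
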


\subsection{Non-restricted adversaries}\label{sec:Markov}
\revchange{In Section~\ref{sec:lower_bounds_restricted} we assumed that the optimal algorithm has a buffer capacity of storing only one packet. This assumption significantly simplified the proofs there.
In this section we relax this assumption, and show a stronger bound for the general, and more natural case, where the size of the buffer available to the optimal algorithm is identical to the size that available to the online algorithm. We use again Yao's method~\cite{yao77probabilistic}, which we used in the proof of Theorem~\ref{Theorem_WP_rand}. Furthermore, we use the same scenario and algorithm \subopt, defined in Section~\ref{sec:lower_bounds_restricted}.
However, as we now allow \subopt\ to store multiple packets in its buffer, \subopt\
can increase its expected throughput by buffering $\mbest$-packets whenever the number of arriving $\mbest$-packets in a single iteration is larger than one, and processing them in iterations where no $\mbest$-packets arrive.
We now evaluate the performance in such settings.}

Denote by $q_j$ the state where there are $j$ $\mbest$-packets in the buffer of \subopt\ at the beginning of an iteration. Note, that when $j > 0$, the count represented by $q_j$ also includes the packet, which is to be transmitted during the iteration. Namely, \subopt\ successfully transmits a packet in every iteration, unless its buffer's state is $q_0$.

We now turn to describe the transition matrix. Denote the probability of having exactly $k$ $\mbest$-packets arriving during one iteration
 by $\malpha_k$. In each iteration we have $M \cdot \minWork$ arriving packets ($M$ packets per cycle, times $\minWork$ cycles per iteration) which are i.i.d. where each packet is a $\mbest$-packet with probability $p$.
Therefore
$\malpha_k = \binom{M \cdot \minWork}{k}p^k (1-p)^{M \cdot \minWork - k}$ when $0 \leq k \leq M \minWork$ and $\malpha_k = 0$ otherwise.

Then, the transition matrix is

\[
\Pi =
\left \{
  \begin{tabular}{ccccccc}
  $ \malpha_0$           & $\malpha_1$ & $\malpha_2$ & \dots       & $\malpha_{B-1}$  & $1 - \sum_{j=0}^{B-1} \malpha_j$ \\
  $ \malpha_0$           & $\malpha_1$ & $\malpha_2$ & \dots       & $\malpha_{B-1}$  & $1 - \sum_{j=0}^{B-1} \malpha_j$ \\
  0                     & $\malpha_0$ & $\malpha_1$ & \dots       & $\malpha_{B-2}$  & $1 - \sum_{j=0}^{B-2} \malpha_j$ \\
  \dots                 & \dots      & \dots      & \dots       & \dots           & \dots \\
  0                     & \dots      & 0          &$\malpha_0$   & $\malpha_1$      & $1 - \sum_{j=0}^{1} \malpha_j$\\
  0                     & 0          & \dots      & 0           & $\malpha_0$      & $1 - \malpha_0$\\
  \end{tabular}
\right \}
\]
where $\Pi_{ij}$ is the probability of transition from state $i$ to state $j$ for each $0 \leq i, j \leq B$.
$\Pi$ is irreducible, because it is possible to get from any buffer state to any other buffer state by some arrival sequence. $\Pi$ is also aperiodic, because its diagonal is non-zero, which represents the fact that if the buffer contains $i$ packets at the beginning of a certain iteration, there exists a positive probability that it would contain $i$ packets also at the beginning of the next iteration.
Furthermore, as $\Pi$ is finite, irreducible and aperiodic, it is also
ergodic, namely, there exists a steady state.
For a long enough input sequence, we can neglect the transient "warm-up" period, and assume that the expected number of iterations where \subopt\ gains nothing during phase 1 is $\frac{N}{2 \minWork} \cdot p_0$, where $p_0$ is the probability that \subopt\ is in state $q_0$. In the rest of the iterations in phase 1 \subopt\ gains $V$ per iteration.
Therefore, the expected throughput of \subopt\ satisfies
\begin{equation} \label{W_rand_SO_B_gt_1}
TP(\msubopt) \geq \frac{N}{2 \minWork} \cdot V(1 - p_0)
\end{equation}

The expected throughput of \alg\  remains the same as in Eq.~\ref{TP_alg_lower_b_WP}.
In order to obtain the competitive ratio for the fully heterogenous case, we divide Eq.~\ref{W_rand_SO_B_gt_1} by Eq.~ \ref{TP_alg_lower_b_WP} and assign again $\minWork=1$ and $p^* = \frac{1}{V(W-1)}$. Then, when $N \to \infty$ the competitive ratio is $c \geq \frac{V}{2}(W-1)(1 - p_0)$.

\revchange{
We find $p_0$ by solving the balance equations defining the steady state of the system, i.e., finding the eigenvector of the transition matrix $\Pi$.}
Fig.~\ref{fig:tight_lower_bnds} depicts the lower bounds as a function of $M$ when $V = W = 10$ for various buffer sizes.
Recall that the probability of a certain packet to be a $\mbest$-packet is $p_* = \frac{1}{V(W-1)} = \frac{1}{90}$. Therefore
only when $M$ is large enough, the expected number of $\mbest$-packets per iteration is sufficient for allowing \subopt\ to really take advantage of its buffer for increasing its performance, resulting in a stronger lower bound on the competitive ratio.

\begin{figure}
\centering
\begin{tikzpicture}
\begin{axis}[
    xlabel = {M},
    ylabel = {Competitive ratio lower bound},
    xmin=0, xmax=97,
    ymin=0, ymax=46,
    xtick={0,20,40,60,80,100},
    ytick={0,5, 10, 15, 20,25, 30, 35, 40, 45},
    legend pos=north west,
    ymajorgrids=true,
    grid style=dashed,
    legend style = {font=\scriptsize},
    label style={font=\scriptsize},
    tick label style={font=\scriptsize},
]

\addplot[color=blue,mark=square,] coordinates {
(1, 0.50000)(6, 2.91789)(11, 5.20441)(16, 7.36669)(21, 9.41148)(26, 11.34517)(31, 13.17379)(36, 14.90306)(41, 16.53836)(46, 18.08481)(51, 19.54724)(56, 20.93021)(61, 22.23803)(66, 23.47479)(71, 24.64436)(76, 25.75037)(81, 26.79630)(86, 27.78539)(91, 28.72074)(96, 29.60526)};
   \addlegendentry{B = 1}

\addplot[color=yellow,mark=o] coordinates {
(1, 0.50000)(6, 2.99863)(11, 5.48822)(16, 7.95838)(21, 10.39756)(26, 12.79346)(31, 15.13346)(36, 17.40519)(41, 19.59693)(46, 21.69811)(51, 23.69968)(56, 25.59435)(61, 27.37676)(66, 29.04357)(71, 30.59338)(76, 32.02659)(81, 33.34524)(86, 34.55276)(91, 35.65367)(96, 36.65337)};
   \addlegendentry{B = 2}

\addplot[color=red,mark=triangle*] coordinates {
    (1, 0.50000)(6, 3.00000)(11, 5.49999)(16, 7.99988)(21, 10.49932)(26, 12.99732)(31, 15.49172)(36, 17.97839)(41, 20.45024)(46, 22.89618)(51, 25.30028)(56, 27.64139)(61, 29.89366)(66, 32.02817)(71, 34.01569)(76, 35.83005)(81, 37.45160)(86, 38.86960)(91, 40.08321)(96, 41.10089)};
   \addlegendentry{B = 4}

\addplot[color=cyan,mark=x,]    coordinates {
    (1, 0.50000)(6, 3.00000)(11, 5.50000)(16, 8.00000)(21, 10.50000)(26, 13.00000)(31, 15.50000)(36, 17.99997)(41, 20.49984)(46, 22.99928)(51, 25.49719)(56, 27.99033)(61, 30.47028)(66, 32.91774)(71, 35.29416)(76, 37.53424)(81, 39.54817)(86, 41.24314)(91, 42.55981)(96, 43.49918)};
   \addlegendentry{B = 8}

\addplot[color=black,mark=triangle,]coordinates {
    (1, 0.50000)(6, 3.00000)(11, 5.50000)(16, 8.00000)(21, 10.50000)(26, 13.00000)(31, 15.50000)(36, 18.00000)(41, 20.50000)(46, 23.00000)(51, 25.50000)(56, 27.99999)(61, 30.49992)(66, 32.99933)(71, 35.49503)(76, 37.96951)(81, 40.34821)(86, 42.41841)(91, 43.86681)(96, 44.60602)};
   \addlegendentry{B = 16}
\end{axis}
\end{tikzpicture}
\caption{Lower bounds on the competitive ratio of every randomized algorithm when $W = V = 10$ where the number of unknown packets arriving in a time slot varies, for different values of $B$ }
\label{fig:tight_lower_bnds}
\end{figure}

In the next section we use the insight obtained from the analysis in the current section to identify several algorithmic concepts appropriate for the problem of buffering with limited knowledge.

\section{Algorithmic Concepts}
\label{sec:algorithmic_conecpts}
In this section we describe the algorithmic concepts underlying our proposed algorithms for
dealing with scenarios of limited knowledge.

\paragraph*{Random selection}
\revchange{For obtaining a good competitive ratio we would like to avoid a scenario where \opt\ successfully transmits a bulk of ``good'' packets, which are originally unknown, while having the online algorithm discard all these packets. This translates to assuring each arriving $U$-packet has some minimal probability of being accepted and parsed.
}
\paragraph*{Speculatively Admit}
Competitive algorithms must ensure they retain throughput from both $K$-packets and $U$-packets.
Furthermore, once a $U$-packet is accepted, there is a high motivation to reveal its
characteristics as soon as possible, thus making educated decisions in the next cycles.

We therefore propose to {\em speculatively} over-prioritize unknown packets over known packets in certain cycles. \revchange{We refer to
the act of over-prioritizing an unknown packet $p$ in some cycle $t$ as {\em admitting} $p$. Respectively, we refer to such a cycle $t$ as an {\em admittance cycle}, and to such a packet $p$ as an {\em
admitted packet}}.

\paragraph*{Classify and randomly select}
Intuitively, as unknown packet characteristics are drawn from a wider range of values, the task of maximizing throughput becomes harder, especially when compared to the optimal throughput possible. To deal with this diversity, we apply a Classify and Randomly Select scheme~\cite{RCnS}.
\revchange{
This approach is based on the following notion: Assume we have an algorithm $\algc$ which is guaranteed to be $c$-competitive if traffic is sufficiently uniform, i.e., for cases where traffic characteristics are within some well-defined range of values. Given some arbitrary input sequence, which might be highly heterogeneous, we virtually partition the sequence of arriving packets into $N>1$ disjoint sub-sequences, which we refer to as {\em classes}, such that each class is sufficiently uniform, i.e., for any specific class $1 \leq i \leq N$ the characteristics of packets corresponding to class $i$ are within some well-defined range of values (as prescribed by $\algc$). The scheme then dictates selecting one of the classes {\em uniformly at random}, and applying $\algc$ to this class, while ignoring all packets corresponding to other classes. One then shows that the overall competitive ratio of this randomized approach is $O(N\cdot c)$-competitive for the overall input sequence.
}

\paragraph*{Alternate between fill $\&$ flush}
This paradigm is especially crucial in cases of limited information. The main motivation for this approach is that whenever a ``good'' buffer state is identified, the algorithm should focus all its efforts on monetizing the current state, maybe even at the cost of dropping packets indistinctly.
\revchange{In terms of buffer management and scheduling, this translates to defining some periods, in which the algorithm processes and transmits all the packets in its buffer, even at the cost of discarding all the arrivals. If these flush periods are short enough, the algorithm gains the high throughput from flushing its buffer, yet without compromising too much throughput due to having packets discarded during the flush.}

\section{Competitive Algorithms}
\label{sec:algorithms}
In this section we present a basic competitive online algorithm for the problem of buffering and scheduling with limited knowledge. We first provide a high-level description of our algorithm, and then turn to specify its details and
analyze its performance.

For simplicity of analysis and algorithm presentation, we assume that the set of possible values of $W$ and $V$ -- the work and profit per packet, respectively -- are known to the algorithm in advance.
In Sections~\ref{sec:improved_algorithms} and~\ref{sec:Practical_Implementation} we show how to remove this assumption without harming the performance of our algorithm, and present several improved variants of this algorithm.
We further note that neither of our proposed solutions require knowing the value of $M$ -- the maximum number of unknown packets arriving in a single cycle -- in advance.

\subsection{High-level Description of Proposed Algorithm}
\label{sec:algorithms_high_level_description}

Our algorithm is designed according to the algorithmic concepts presented in Section \ref{sec:algorithmic_conecpts} as follows.

\paragraph*{Randomly select and speculatively admit}
In every cycle $t$ during which a $U$-packet arrives, the algorithm picks
$t$ as an admittance cycle with some probability $r$
(to be determined in the sequel). In every cycle chosen
as an
admittance cycle, the algorithm picks exactly one of the $U$-packets arriving at $t$ to serve as the {\em admitted} packet.  This $U$-packet is chosen uniformly at random out of all $U$-packets arriving at $t$.
At the end of the arrival step, the algorithm schedules the admitted $U$-packet (if one exists) for processing,
hence {\em parsing} the packet. We note that if no such $U$-packet exists, or if $t$ is not an admittance cycle, then the algorithm may only accept known arriving packets, and would eventually schedule the top-priority packet residing in the Head-of-Line (HOL) for processing. The exact notion of priority will be detailed later.

\paragraph*{Classify and randomly select}
We implicitly partition the possible types of arriving packets into classes $C_1, C_2, \ldots C_m$; the criteria for partitioning and the exact value of $m$ will be specified later. Our algorithm picks a single {\em selected} class, uniformly at random from the $m$ classes.
Our goal is to provide {\em guarantees} on the performance of our proposed algorithm for packets belonging to the selected class, which is henceforth denoted $\Cs$.
Packets which belong to the selected class are referred to as {\em $\Cs$-packets}.
Following our previously introduced notation, known (unknown) packets that belong to the selected class, i.e.,
$\Cs$-packets for which their attributes are known (unknown), are denoted as {\em $\CsK$-packets} ({\em $\CsU$-packets}).

Focusing solely on packets belonging to $\Cs$ may seem like a questionable choice, especially if there are few packets arriving which belong to this class, or if the characteristics of packets belonging to this class are poor (e.g., they have low profit and require much work). However, this naive description is meant only to simplify the analysis.
In Section~\ref{sec:improved_algorithms} we show how to remedy
this naive approach in order to deal with these apparent shortcomings, while keeping the analytic guarantees intact.

\paragraph*{Alternate between fill $\&$ flush}
Our algorithm will be alternating between two states: the {\em fill} state, and the {\em flush} state.
We define an algorithm to be {\em \hfull} if its buffer is filled with known $\Cs$-packets.
Once becoming \hfull, our algorithm switches to the flush state, during which it discards all arriving packets and
continuously processes queued packets. Once the buffer empties, the algorithm returns to the fill phase.
Again, in Section~\ref{sec:improved_algorithms} we show how to improve upon this naive simplified approach.

\subsection{A General Classify and Randomly Select Mechanism}\label{sec:RC&S}
We now turn to explain the fundamentals of the classifying mechanism of our algorithm.

For each packet $p$ we assign a \emph {work-class} $C\W_i$, and denote the set of potential characteristic values within class $C\W_i$ by $X\W_i$.
 Let $\deltaW$ denote the maximal ratio between the work values of two packets, which belong to the same work-class. Similarly, for each packet $p$ we assign a \emph {profit-class} $C\P_i$, and denote the set of potential characteristic values within class $C\P_i$ by $X\P_i$. Let $\deltaP$ denote the maximal ratio between the profits of two packets, which belong to the same profit-class. Throughout our analysis, we will use $\deltaP$ and $\deltaW$ which are both constants.

Denote by $\mW$ and $\mP$ the number of work-classes and profit-classes, respectively.
We say a packet $p$ is of {\em combined-class} $C_{(i,j)}$  if it is of work-class $C\W_i$ and of profit-class $C\P_j$.
Note that in terms of work, the class to which a packet $p$ belongs is defined statically by the total work of
$p$, and does not depend upon its remaining processing cycles, which may change over time.

Upon initialization, the algorithm selects a class by picking $i^* \in \set{1,\ldots,\mW}$ and $j^* \in
\set{1,\ldots,\mP}$, each chosen uniformly at random. Then, the selected combined-class is $\Cs =
C_{(i^*,j^*)}$.

We will later define several ways to partition the packets into classes, each tailored and optimized for some specific scenarios of possible work and profit values.

\subsection{The \SAM\ Algorithm}
\label{sec:SAM}
We now describe the details of our algorithm, Speculatively Admit (\SAM), depicted in Algorithm~\ref{alg:SAM}.
The pseudo-code in Algorithm~\ref{alg:SAM} uses the procedures \emph{UpdatePhase()}, \emph{SortBuf()},
and \emph{\makeroom}, whose pseudo-code appears in Algorithms~\ref{alg:UpdatePhase},~\ref{alg:SortBuf} and~\ref{alg:MakeRoom}, respectively. The procedure \makeroom\ is destined to assure a free space for a high-priority arriving packet, even at the cost of pushing-out and dropping a lower-priority packet from the tail of the buffer, if the buffer is full.

\newcommand\AlgPhase[1]{%
\hspace*{\dimexpr-\algorithmicindent-2pt\relax}%
\vspace*{-.5\baselineskip}\Statex\hspace*{-\algorithmicindent}{\em #1}%
\vspace*{-.9\baselineskip}\Statex\hspace*{\dimexpr-\algorithmicindent-2pt\relax}%
}

\begin{algorithm}[t!]
\caption {UpdatePhase()}\label{alg:UpdatePhase}
\begin{algorithmic}[1]
\If {buffer is empty}
    \State {\em phase} = fill
\ElsIf {buffer is \hfull}
    \State {\em phase} = flush
\EndIf
\Comment {if buffer is neither empty  nor \hfull, {\em phase} is unchanged.}
\end{algorithmic}
\end{algorithm}

\begin{algorithm}[t!]
\caption {SortBuf()}\label{alg:SortBuf}
\begin{algorithmic}[1]
\State sort queued packets as follows: admitted packet first; $\CsK$-packets next; rest of the packets last; break ties by FIFO
\end{algorithmic}
\end{algorithm}

\begin{algorithm}[t!]
\caption {MakeRoom()}\label{alg:MakeRoom}
\begin{algorithmic}[1]
\If {the buffer is full}
	\State SortBuf()
	\State drop a packet from the tail
\EndIf
\end{algorithmic}
\end{algorithm}

\begin{algorithm}[t!]
\caption {\SAM: at every time slot $t$ after transmission} \label{alg:SAM}
\begin{algorithmic}[1]
\Statex
\AlgPhase{Arrival Step:}
\State {\em phase} = UpdatePhase()
\label{alg:SAM:line:UpdatePhaseAtArrivalBegin}
\State {\em admittance} = true w.p. $r$
\label{alg:SAM:line:DecideAdmittance}
\While {{\em phase} $==$ fill \AND exists arriving packet $p$} \label{alg:SAM:line:while_begin}
    \If {$p$ is a $\CsK$-packet}\label{alg:SAM:if_is_Gk}
	   \If {there are $B-1$ $\CsK$-packets in the buffer}\label{alg:SAM:if_B_minus_1}
	       \State drop admitted packet if exists        \label{alg:SAM:drop_ap}
       \EndIf    \label{alg:SAM:if_B_minus_1_endif}
	   \State \makeroom\label{alg:SAM:MakeRoom_by_Gk}
       \State accept $p$ \label{alg:SAM:accept_Gk}
    \ElsIf {$p$ is unknown AND \emph{admittance}}\label{alg:SAM:if_p_is_U}
        \If {$\Autp=1$}\label{alg:SAM:if_reservoir}
    	   \State \makeroom \label{alg:SAM:MakeRoom_by_ap}
            \State mark $p$ as admitted
            \State accept $p$ \label{alg:SAM:accept_ap}
        \Else
    	   \State w.p. $1/\Autp$, swap the admitted packet with $p$.
            \label{alg:SAM:line:reservoir}
        \EndIf\label{alg:SAM:if_reservoir_endif}
    \EndIf
    \If {buffer is not full} \label{alg:SAM:line:if_full}
        \State accept $p$ \label{alg:SAM:line:greedily_accept}
    \EndIf
	\State {\em phase} = UpdatePhase()
	\label{alg:SAM:line:UpdatePhase_in_arrival}
	\State SortBuf()
    \label{alg:SAM:line:sort_queue_in_arrival}
\EndWhile \label{alg:SAM:line:while_end}
\Statex
\AlgPhase{Processing Step:}
\State process HoL-packet
\label{alg:SAM:line:processHoL}
\State {\em phase} = UpdatePhase()
\label{alg:SAM:line:UpdatePhase_after_processing}
\State SortBuf()
\label{alg:SAM:line:sort_queue_in_processing}
\end{algorithmic}
\end{algorithm}

Once in the arrival step, algorithm \SAM\ updates its phase (line \ref{alg:SAM:line:UpdatePhaseAtArrivalBegin}).
In each cycle, the algorithm tosses a coin with some probability $r$, to be determined later, to decide whether this is an \emph{admittance cycle}, namely, a cycle in which the algorithm may admit an unknown packet (line~\ref{alg:SAM:line:DecideAdmittance}).
If the phase is flush, the algorithm skips the while loop (lines
\ref{alg:SAM:line:while_begin}-\ref{alg:SAM:line:while_end}), thus discarding all arriving packets.

If the phase is fill, which in particular implies that the buffer is not \hfull, the algorithm accepts every arriving $\CsK$-packet (lines~\ref{alg:SAM:if_is_Gk}-\ref{alg:SAM:accept_Gk}).
For assuring a free slot for the arriving $\CsK$-packet, the algorithm calls \makeroom\ (line~\ref{alg:SAM:MakeRoom_by_Gk}) before accepting the packet (line~\ref{alg:SAM:accept_Gk}).
The if-clause in lines~\ref{alg:SAM:if_B_minus_1}-\ref{alg:SAM:if_B_minus_1_endif} handles the special case where there are already $B-1$ $\CsK$-packets in the buffer; in this special case, after accepting the arriving $\CsK$-packet, the buffer will become \hfull, and therefore it should stop admitting packets.

If the phase is fill and this is an \emph{admittance} cycle (line~\ref{alg:SAM:if_p_is_U}), the algorithm admits a single $U$-packet arriving in this cycle, if such a packet exists.
In lines~\ref{alg:SAM:if_reservoir},\ref{alg:SAM:line:reservoir}, $\Autp$ denotes the number of $U$-packets which arrive in cycle $t$ by the arrival of packet $p$, including $p$ itself. Lines~\ref{alg:SAM:if_reservoir}-\ref{alg:SAM:if_reservoir_endif}
essentially perform a reservoir sampling~\cite{Reservoir}, which imply that the admitted $U$-packet is chosen uniformly at random out of all $U$-packets arriving in this cycle.

Finally, if the buffer is not full, the algorithm greedily accepts every arriving packet (lines \ref{alg:SAM:line:if_full}-\ref{alg:SAM:line:greedily_accept}).

While in the processing step, the algorithm simply processes the top-priority packet in the buffer (line~\ref{alg:SAM:line:processHoL}). Finally, the algorithm updates its phase and sorts the queued packets each time it either accepts or processes a packet (lines
\ref{alg:SAM:line:UpdatePhase_in_arrival}-\ref{alg:SAM:line:sort_queue_in_arrival} and
\ref{alg:SAM:line:UpdatePhase_after_processing}-\ref{alg:SAM:line:sort_queue_in_processing}). Note that the marking of a packet as an ``admitted packet'' is \emph{cycle-based}, namely, once an admitted packet is processed, it is not considered ``admitted'' anymore.
\revchange{To better understand \SAM, please refer to ~\ref{App:run_Example}, showing a running example of the algorithm.}

\subsection{Performance Analysis}
\label{sec:SAMWP}

We now turn to show an upper bound on the performance of our algorithm (for $W, V > 1$), captured by the following theorem (see~\ref{thm:SAMWP:proof} for the proof):

\begin{theorem}\label{thm:SAMWP}
\SAMWP\ is
$
O
\left(
\left[
\frac{M}{r} + \deltaW \cdot \deltaP
\right]
\cdot
\mW \cdot \mP
\right)$
-competitive.
\end{theorem}

\revchange {Theorem~\ref{thm:SAMWP} shows an inverse linear dependency of the competitive ratio on the
probability of choosing a cycle as an admittance cycle $r$. Thus, the best competitive ratio is attained for $r=1$, i.e., every cycle where $U$-packets arrive should be an admittance cycle.
In practical scenarios, however, one might want to be more conservative in choosing admittance cycles. E.g., one might choose $r<1$ so as to allow non-parsing cycles even when $U$-packets arrive, thus speeding up the processing of $\CsK$-packets. If one indeed chooses $r=1$, randomization should be maintained only for choosing the specific $U$-packet to be admitted, and the choice of the selected class.
We further explore the effect of the choice of parameter $r$ in Section~\ref{sec:simulations}.}

In the special cases of homogeneous work values (homogeneous profit values), we assign $\deltaW = \mW = 1$ ($\deltaP = \mP = 1$, resp.) in the upper bound implied by Theorem \ref{thm:SAMWP}, and obtain the following corollary:
\begin{corollary}\begin{inparaenum}[(a)]
\hfill \break
\item In the special case of homogeneous work values, \SAMWP\ is $O\left(\left(\frac{M}{r} + \deltaP \right) \cdot \mP\right)$-competitive.
\\\item In the special case of homogeneous profit values, \SAMWP\ is $O\left( \left(\frac{M}{r} + \deltaW \right) \cdot \mW\right)$-competitive.
\end{inparaenum}
\end{corollary}

Lastly, we note that when all packets are known upon arrival, i.e. $M=0$, \SAMWP\ is $(\deltaW \cdot \deltaP \cdot \mW \cdot \mP)$-competitive (see ~\ref{thm:SAMWP:proof}).

\subsection{Concrete Classification Mechanisms}\label{sec:RC&S_concrete}
We now show various classify and randomly select mechanisms, which are tailored and optimized for different scenarios, depending on the profit and work values.

\paragraph{A linear classification} When a characteristic consists of a small set of potential values, we let each class include a single value of this characteristic. As a result, the competitive ratio of the algorithm is linearly depended upon the number of distinct potential value of the respective characteristic. For instance, when the set of potential work values
is small, we let each potential work value define a class. As a result, the competitive ratio of \SAMWP, implied by Theorem~\ref{thm:SAMWP}, is linearly depended upon the number of distinct work values, captured by the parameter $\mW$. Note that in this case we have $X\W_i = \set{w_i}$, implying that $\deltaW$, the max-to-min ratio of values within $X\W_i$, is 1.

\paragraph{A logarithmic classification} When the set of potential values of a characteristic is large, letting each value define a unique class results in a poor competitive ratio. Therefore, in such cases we use a logarithmic-scaled class partitioning as follows. We say that a packet $p$ is of a certain class (either work- or profit-) $i$ if its corresponding value is in the
interval
\begin{equation} \label{Eq:log_classes}
  X_i =\begin{cases}
               [1,2] & i=1 \\
               [2^{i-1}+1, 2^i] & i > 1.
            \end{cases}
\end{equation}

In particular, using the above partition packets into classes, we obtain that $\deltaP = \deltaW = 2$, $\mP = \log_2V$ and $\mW = \log_2W$. Using Theorem~\ref{thm:SAMWP}, we obtain the following corollary:

\begin{corollary}
\SAMWP\ is $O\left(\frac{M}{r} \log_2 W \log_2 V\right)$-competitive.
\end{corollary}

We note that if we know the number of distinct values for each characteristics and the values of $W$ and $V$, we can choose the appropriate classification scheme and have $\mW$ to be the minimum between $\log_2 W$, and the number of distinct work values; and have $\mP$ to be the minimum between $\log_2 V$, and the number of distinct profit values. Moreover, in any of our classification schemes, $\deltaW, \deltaP \leq 2$.

\section{Improved Algorithms}
\label{sec:improved_algorithms}
Algorithm \SAM\ selects a single class uniformly at random so that the characteristics of
packets on which it focuses, namely, $\Cs$-packets, differ by at most a constant factor. This gives the sense of ``uniformity'' of traffic within the class being targeted, which in turn reduces the variability of characteristics of packets on which the algorithm focuses.
However, in practice there are various cases where the strict decisions made by \SAM\ can be relaxed without harming its competitive performance guarantees. In practice, such relaxations actually allow obtaining a throughput far superior to
that of \SAM.
In what follows we describe such modifications, which we incorporate into our improved algorithm, \SAO, and prove that all our performance guarantees for \SAM\ still hold for \SAO.

\paragraph*{Class closure}
Recall the partitioning of packets into classes, described in Section~\ref{sec:RC&S}, namely,
 $\set{C_{(i,j)} | i=1,\ldots,\mW, j=1,\ldots,\mP}$. We let the {\em $(i,j)$-closure class} be defined as
 $C^*_{(i,j)}=\bigcup_{i'\leq i, j'\geq j} C_{(i',j')}$.

This definition means that the work of any packet in $C^*_{(i,j)}$
is within a ratio of at most $\deltaW$ of the work of any packet in $C_{(i,j)}$, and similarly for the profit of any packet in $C^*_{(i,j)}$.
Formally, for any packets $p \in C_{(i,j)}$ and  $p^* \in C^*_{(i,j)}$, $w(p^*) \leq \deltaW \cdot w(p)$ and $ v(p^*) \geq \frac{v(p)}{\deltaP}$.

We let \SAOWP\ denote the algorithm where the selected class $\Cs$ is chosen to be $C^*_{(i,j)}$, for some values of
$i,j$ chosen uniformly at random from the appropriate sets.
A simple substitution argument shows that thus picking $C^*_{(i,j)}$ by \SAOWP, instead of selecting $C_{(i,j)}$ as done
in \SAMWP, leaves the analysis detailed in Section~\ref{sec:SAMWP} intact.

\paragraph*{Fill during flush (pipelining)}
Algorithm \SAM\ was defined such that no arriving packets are ever accepted during the flush phase. This enables the partitioning of time into disjoint intervals (determined by \SAM's buffer being empty et the end of such an interval), and applying the comparison of performance of \opt, on the one hand, and \SAM, on the other hand, independently for each interval. In practice, however, allowing the acceptance of packets during a flush phase
cannot harm the analysis, nor the actual performance, if this is done prudently:
packets which arrive during the flush phase are accepted according to the same priority suggested by the algorithm's behavior in the fill phase.
Furthermore, the algorithm stores in the buffer packets which arrive during the flush phase, but never schedules them for processing before it successfully transmits all $B$ packets that were stored in the buffer when it turned \hfull.

\paragraph*{Improved scheduling}
\SAM\ sorts the queued packets in $\CsK$-first order.
For simplicity of presentation, we assumed in Section~\ref{sec:algorithms} that within the set of $\CsK$-packets, as well as within the set of non-$\CsK$-packets, packets are  internally ordered by FIFO.
However, one may consider other approaches as well to performing such scheduling for each of these sets (while maintaining $\CsK$-first order between the sets). We consider specifically the following methods:
\begin{inparaenum}[(i)]
\item FIFO,
\item $W$-then-$V$, which orders packets by a non-decreasing order of remaining work, and breaks ties by non-increasing order of profit, and
\item non-increasing order of packet {\em effectiveness}, where the effectiveness of a packet is defined as its
    profit-to-work ratio.
\end{inparaenum}

We emphasize that the packet scheduled for processing during an admittance cycle remains a $U$-packet, which is selected uniformly at random from the arriving $U$-packets at this cycle.
All the \emph{non}-admitted $U$-packets, however, are located at the tail of the queue, thus representing the fact that their priority is lower than that of every known packet.
By applying different scheduling regimes, we obtain different flavors of \SAO.

The following Theorem shows that the performance of all flavors of \SAO\ is at least as good as the performance of \SAM.
\begin{theorem}\label{thm:SAOWP}
\SAOWP\ is
$
O
\left(
\left[
\frac{M}{r} + \deltaW \cdot \deltaP
\right]
\cdot
\mW \cdot \mP
\right)$
-competitive.
\end{theorem}

For the proof, see~\ref{thm:SAOWP:proof}. We study the performance of the various flavors of \SAO\ in Section \ref{sec:simulations}.

\section{Practical Implementation}\label{sec:Practical_Implementation}
While presenting our basic algorithm in Section \ref{sec:algorithms}, we assumed for simplicity that the values of $W$
and $V$ -- the maximal work and profit per packet, respectively -- are known to the algorithm in advance.
We now show how to relax these assumptions without harming the performance of our algorithms.

We refer to an algorithm implementation that does
not know these values in advance as a {\em values-oblivious} algorithm, and to an algorithm implementation that knows the values of $W$ and $V$ in advance as a {\em values-aware} algorithm. We will show that a values-oblivious algorithm can obtain a performance which is no worse than that of a values-aware algorithm, even if the values-aware algorithm knows not only $W$ and $V$, but also the concrete classes in which packets will arrive.

Our implementation of a values-oblivious algorithm is based on an application of reservoir sampling \cite{Reservoir} on
classes revealed during packet arrivals, as we will detail shortly.
A new class is revealed either due to the arrival of a $K$-packet $p$, or due to a $U$-packet $q$ being parsed,
corresponding to a class previously unknown to the algorithm. We call such an event an {\em uncovering of a new
class}.

The values-oblivious algorithm implementation performs the following alongside all decisions made by the values-aware algorithm:
Before the arrival sequence begins we initiate a counter $N$ of known classes to be $N=0$.
Upon the uncovering of a new class at $t$ the algorithm increments $N$ by one (to reflect the updated number of
known classes), and replaces the previously selected class with the new class with probability $1/N$.

As the above procedure essentially performs a reservoir sampling on the collection of classes known to the algorithm,
it essentially implements the selection of a class uniformly at random among all {\em a posteriori} known
classes~\cite{Reservoir}.

It therefore follows that the distribution of the packets corresponding to the eventual selected class (after the sequence ends) handled by the values-oblivious algorithm is identical to the distribution of the packets handled by the values-aware algorithm. Therefore the expected performance of the values-oblivious algorithm is lower bounded by the expected performance of the values-aware algorithm.
We note that the implementation of the values-oblivious algorithm can be applied to any of the variants described in our previous sections.

\section{Simulation Study}
\label{sec:simulations}
In this section we present the results of our simulation study intended to validate our theoretical results, and
provide further insight into our algorithmic design.
\revchange{Our choice of distributions for the parameters of the traffic characteristic enables us to evaluate our algorithms performance in a wide range of settings. These choices, as we show in the sequel, are also motivated by the properties of real-world traffic.}

\subsection{Simulation Settings}\label{sec:simulation_set}
We simulate a single queue in a gateway router which handles a bursty arrival sequence of packets with high work
requirements (corresponding, e.g., to IPSec packets, requiring AES encryption/decryption) as well as packets with
low work requirements (such as simple IP packets requiring merely IPv4-trie processing).
Arriving packets also have arbitrary profits, modeling various QoS levels.

Our traffic is generated by a Markov modulated Poisson process (MMPP)
with two states, LOW and HIGH, such that the burst during the HIGH state
generates an average of 10 packets per cycle, while the LOW state generates an average of only $0.5$ packet per
cycle. The average duration of LOW-state periods is a factor $W$ longer than the average duration of HIGH-state
periods. This is targeted at allowing some traffic arriving during the HIGH-state to be drained during the LOW-state.

In our simulations, we do not deterministically bound the maximum number, $M$, of $U$-packets arriving in a cycle, but rather control the expected intensity of $U$-packets by letting each arriving packet be a $U$-packet with some probability $\alpha \in [0,1]$. We thus obtain that the expected number of $U$-packets per cycle during the HIGH state is $10 \alpha$.

In real-life scenarios, the maximum work, $W$, required by a packet, is highly implementation-depended. It depends on the specific hardware, processing elements, and software modules.
However, several works which investigated the required work on typical tasks~\cite{ramaswamy09, salehi09, salehi12} indicate that $W$ is two orders of magnitude larger than the work required for doing an IPv4-trie search or classification of a packet. We refer to IPv4-trie search or classification of a packet as the baseline unit of work, captured by our notion of ``parsing''. We therefore set the maximum work required by a packet to $W=256$ throughout this section.
As the potential set of characteristics is large, we use a logarithmic classification scheme (recall Section~\ref{sec:RC&S_concrete}).

Determining the maximum profit, $V$, associated with a packet, is a challenging task. This value depends both on implementation details, as well as on proprietary commercial and business considerations. In order to have a diverse set of values, which model distinct QoS requirements, we set the maximum profit associated with a packet to $V=16$ throughout this section.

The values $W=256$ and $V = 16$ imply a total of $8 \cdot 4 = 32$ potential classes for the algorithm to select from, respectively.
\revchange{The value of each characteristic for each packet is drawn from an approximation of a Pareto-distribution as follows. First, we randomly generate numbers, following a Pareto-distribution. Next, numbers are rounded, to get integer values. Finally, for disallowing values above the maximum (256 for work values and 16 for profit values), all the cases where the randomly generated values were above the maximum were truncated, namely, treated as if the generated value was exactly the maximal value.
The averages and standard deviations of the values obtained after this generation process are 17.97 and 22.22 for packet work, and 3.66 and 3.20 for packet profit.
The schematic probability distribution function of the characteristics values is depicted in Fig.~\ref{fig:PDF_of_values}.
Note the spike at its maximum, due to the truncation described above.}
Unless stated otherwise, we assume that $B=10$, $r = 1$ and each arriving packet is a $U$-packet with probability $\alpha = 0.3$. We thus obtain that the expected number of $U$-packets arriving during the HIGH state is $0.3 \cdot 10 = 3$ per cycle.

\setlength{\belowcaptionskip}{-7pt}

\setlength{\belowcaptionskip}{-7pt}

\begin{figure}[t]
\centering
\begin{tikzpicture}
	\begin{axis}[
		ybar,
		bar width=2ex,		
		ticks=none,
		xlabel=Value,
		xlabel style={at={(0.5,2ex)}},
		ylabel style={at={(4ex,0.5)}},
        legend style = {font=\scriptsize},
        label style={font=\scriptsize},
        tick label style={font=\scriptsize},
		xticklabels=empty,
		ylabel=PDF,
		yticklabels=empty]
	\addplot[fill=gray] plot coordinates {
		(1,0.2136)
		(2,0.2779)
		(3,0.1659)
		(4,0.1053)
		(5,0.0635)
		(6,0.0440)
		(7,0.0303)
		(8,0.0219)
		(9,0.0173)
		(10,0.0113)
		(11,0.0080)
		(12,0.0063)
		(13,0.0054)
		(14,0.0047)
		(15,0.0030)
		(16,0.0214)
	};
	\end{axis}
\end{tikzpicture}	
\caption{Probability distribution function of the characteristics values}
\label{fig:PDF_of_values}
\end{figure}
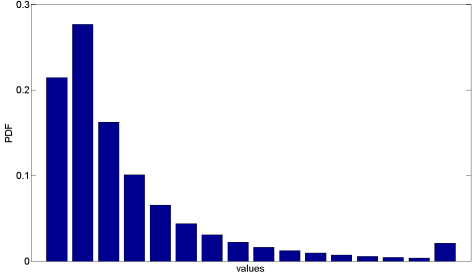

As a benchmark which serves as an upper bound on the optimal performance possible, we consider a relaxation of the offline problem as a knapsack problem. Arriving packets are viewed as items, each with its size (corresponding to the packet's work) and value (corresponding to the packet's profit). The allocated knapsack size equals the number of time slots during which packets arrive. The goal is to choose a highest-value subset of items which fits within the given knapsack size.
This is indeed a relaxation of the problem of maximizing throughput during the arrival sequence in the offline setting, since the knapsack problem is not restricted by any finite buffer size during the arrival sequence, nor by the arrival time of packets (e.g., it may ``pack'' packets even before they arrive).

We employ the classic 2-approximation greedy algorithm for solving the knapsack problem~\cite{williamson11design}, and
use its performance as an approximate upper bound on the performance of \opt.
For considering the additional profit which \opt\ may gain from packets which reside in its buffer at the end of the arrival sequence, we simply allow the offline approximation an additional throughput of $BV$ for free, which is an upper bound on the benefit it may achieve after the arrival sequence ends.

We compare the performance of studied algorithms by evaluating their {\em performance ratio}, which is the ratio between the algorithm's performance and that of our approximate upper bound on the performance of \opt.

We compare the performance of the following algorithms:
\begin{enumerate}
\item {\em FIFO}: A simple greedy non-preemptive FIFO discipline that simply accepts packets and processes each
    packet until completion, regardless of its required work or value.
\item {\em \SAM}: Algorithm \SAM, described in Section \ref{sec:algorithms}.
\item {\em \SAO\ FIFO}: Algorithm \SAO\ where priority ties are broken by FIFO order.
\item {\em \SAO\ $W$-Then-$V$}: Algorithm \SAO\ where priority ties are broken in non-decreasing order of remaining
    work, and further ties are broken in non-increasing order of profit. This variant is denoted by \SAO $W-V$ in Figures~\ref{fig:g_W}-\ref{fig:r}.
\item {\em \SAO\ EFFECT}: Algorithm \SAO\ where priority ties are broken in non-increasing order of their
    profit-to-work ratio.
\end{enumerate}

We recall that all the flavors of \SAO\ listed above maintain a $\CsK$-first order, and differ only in the internal
ordering \emph{within} each set (namely, within the set of $\CsK$-packets, as well as within the set of {\em
non}-$\CsK$-packets).

All flavors of \SAO\ described above employ the class-closure and the fill-during-flush modifications defined in
Section \ref{sec:improved_algorithms}.
For each choice of parameters we show the average of running 100 independently-generated traces of 10K packets each. In all our simulations the standard deviation was below 0.035.

\subsection{Simulation Results}
Figures~\ref{fig:g_W}-\ref{fig:r} show the results of our simulation study.
First we note that \SAM\ exhibits a very low performance ratio, similar to that of a simple FIFO (which disregards packets parameters altogether). This is due to the fact that \SAM\ focuses only on a specific class, which consists of a relatively small part of the input, and it thus
spends processing cycles on packets that would not be eventually transmitted.

For the variants of \SAO\ we consider, in all simulations the best scheduling policy is by non-increasing
effectiveness, followed by employing the $W$-then-$V$ approach. FIFO scheduling, in spite of it being simple and
attractive, comes in last in all scenarios.
This behavior is explained by the fact that both former scheduling policies in \SAO\ clear the buffer more
effectively once it is \hfull. The latter FIFO scheduling approach clears the buffer in an oblivious manner, and
therefore doesn't free up space for new arrivals fast enough.
We now turn to discuss each of the scenarios considered in our study.

\subsubsection{The Effect of Selected Class}
Our first set of results sheds light on the effect of the class selected by an algorithm on its performance.
Fig.~\ref{fig:g_W} shows the results where the selected profit-class $j^*$ is 1, which makes \SAO\ allow all profits, and the choice of work-class $i^*$ varies.
The most interesting phenomena is exhibited by \SAO\ FIFO. Its performance is very poor if the work-class may
contain packets requiring very little work. This is due to the fact that only a small fraction of the traffic
requires this little work, and the algorithm scarcely arrives at being \hfull. As a consequence, the algorithm
handles many low-priority packets, which are handled in FIFO order, giving rise to far-from-optimal decisions. The
algorithm steadily improves up to some point, and then its performance deteriorates fast as it assigns high-priority
to packets with increasingly higher processing requirements. In this case the algorithm becomes \hfull\ too
frequently, and allocates many processing cycles to low-effectiveness packets.
The maximum performance is achieved for $i^*=3$,
which implies that the algorithm flushes whenever its buffer is filled up with packets whose work is at most
$2^{i^*}=8$. This value suffices to allow the algorithm to prioritize a rather large portion of the arrivals
(recalling the Pareto distribution governing packet work-values), while ensuring the processing toll of
high-priority packet is not too large.
This strikes a (somewhat static) balance between the amount of work required by a packet, and its expected potential
profit.
The other variants of \SAO\ exhibit a gradually decreasing performance, due to
their higher readiness to compromise over the required work of packets they deem as high-priority traffic.
\SAM\ shows a similar performance deterioration, for a similar reason, when the selected work-class $i^*$ is increased from 1 up to 6. However, when increasing $i^*$ above 6, \SAM's performance increases again. This improvement is explained by the fact that, due to the Pareto-distribution of the work values, the number of packets which belong to each work-class rapidly diminishes when switching to work-class indices closest to the maximum of 8;
recall that \SAM\ over-prioritizes only packets which belong to a single randomly selected class, i.e., \SAM\ does not employ the class closure optimization (described in Section~\ref{sec:improved_algorithms}).
In such a case, \SAM\ is coerced into processing also packets which do not belong to the selected class -- namely, packets with {\em lower} work -- which somewhat compensates for the poor choice of the work-class.
We verified this explanation by additional simulations (not shown here), in which the work-class of packets was chosen from the uniform distribution. In such a case, where there is an abundance of packets from every possible work-class, the performance of \SAM\ consistently degrades with the increase of $i^*$, which implies a poorer choice of work-class.

Similar phenomena are exhibited in Fig.~\ref{fig:g_V}, where we consider the effect of the profit-class $j^*$
selected by an algorithm on its performance. In this set of simulations all work-values were allowed (i.e., the
selected work-class is 8).
In this scenario the performance
of all algorithms improves as the selected profit-class index increases, and the
algorithms are able to better restrict their focus on high profit packets as the packets receiving high-priority.
We note the fact that \SAO\ FIFO and regular FIFO have a matching performance in the case the selected profit-class
is 1, since in this case \SAO\ FIFO is identical to plain FIFO (since it simply indiscriminately accepts and processes all incoming packets in FIFO order).

In subsequent results described hereafter, we fix both the work-class and the profit-class to be 3, which represents
a mid-range class for both the profit and the work.

\setlength{\belowcaptionskip}{-7pt}

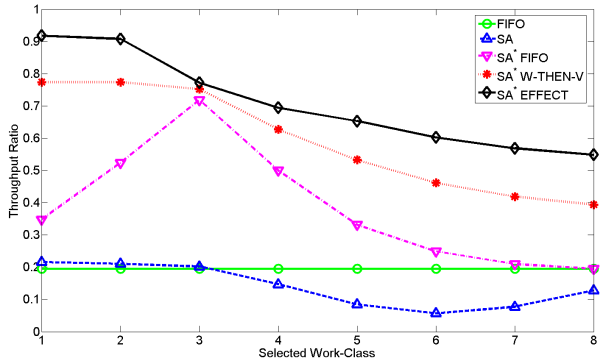
\begin{figure}[t]
\centering
\begin{tikzpicture}
	\begin{axis}[
		legend style={at={(0.5,1.18)},anchor=north,legend columns=-1,font=\scriptsize},
	    label style={font=\scriptsize},
	    tick label style={font=\scriptsize},
		ylabel=Throughput ratio,
		xlabel=Selected work-class,
		ytick={0,0.1,0.2,0.3,0.4,0.5,0.6,0.7,0.8,0.9,1},
		ymin=0,
		ymax=1,
	    ymajorgrids=true,
	    grid style=dashed,
		xtick=data]
	\addplot[color=blue,mark=x] coordinates {
		(1,0.1930)(2,0.1930)(3,0.1930)(4,0.1930)(5,0.1930)(6,0.1930)(7,0.1930)(8,0.1930)
	};
	\addlegendentry{FIFO}
	\addplot[color=red,mark=triangle*] coordinates {
		(1,0.2137)(2,0.2103)(3,0.1993)(4,0.1455)(5,0.0832)(6,0.0559)(7,0.0729)(8,0.1262)
	};
	\addlegendentry{SA}
	\addplot[color=yellow,mark=o] coordinates {
		(1,0.3456)(2,0.5201)(3,0.7123)(4,0.4954)(5,0.3280)(6,0.2452)(7,0.2096)(8,0.1926)
	};
	\addlegendentry{SA$^*$ FIFO}
	\addplot[color=cyan,mark=*] coordinates {
		(1,0.7671)(2,0.7671)(3,0.7453)(4,0.6208)(5,0.5262)(6,0.4571)(7,0.4113)(8,0.3928)
	};
	\addlegendentry{SA$^*$ $W$-$V$}
	\addplot[color=black,mark=square*] coordinates {
		(1,0.9086)(2,0.8988)(3,0.7655)(4,0.6867)(5,0.6442)(6,0.5940)(7,0.5635)(8,0.5425)
	};
	\addlegendentry{SA$^*$ EFFECT}
	\end{axis}
\end{tikzpicture}	
\caption{Effect of chosen work-class $i^*$}
\label{fig:g_W}
\end{figure}

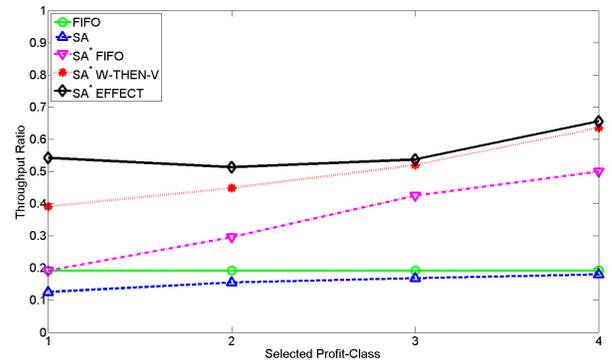
\begin{figure}[t]
\centering
\begin{tikzpicture}
	\begin{axis}[
		legend style={at={(0.5,1.18)},anchor=north,legend columns=-1,font=\scriptsize},
	    label style={font=\scriptsize},
	    tick label style={font=\scriptsize},
		ylabel=Throughput ratio,
		xlabel=Selected profit-class,
		ytick={0,0.1,0.2,0.3,0.4,0.5,0.6,0.7,0.8,0.9,1},
		ymin=0,
		ymax=1,
	    ymajorgrids=true,
	    grid style=dashed,
		xtick=data]
	\addplot[color=blue,mark=x] coordinates {
	    (1,0.1919)(2,0.1919)(3,0.1919)(4,0.1919)
	};
	\addlegendentry{FIFO}
	\addplot[color=red,mark=triangle*] coordinates {
		(1,0.1255)(2,0.1554)(3,0.1686)(4,0.1803)
	};
	\addlegendentry{SA}
	\addplot[color=yellow,mark=o] coordinates {
		(1,0.1922)(2,0.2960)(3,0.4248)(4,0.4997)
	};
	\addlegendentry{SA$^*$ FIFO}
	\addplot[color=cyan,mark=*] coordinates {
		(1,0.3910)(2,0.4486)(3,0.5201)(4,0.6358)
	};
	\addlegendentry{SA$^*$ $W$-$V$}
	\addplot[color=black,mark=square*] coordinates {
		(1,0.5427)(2,0.5137)(3,0.5372)(4,0.6557)
	};
	\addlegendentry{SA$^*$ EFFECT}
	\end{axis}
\end{tikzpicture}
\caption{Effect of chosen profit-class $j^*$}
\label{fig:g_V}
\end{figure}

\subsubsection{The Effect of Missing Information}
Fig.~\ref{fig:M} illustrates the performance ratio of our algorithms as a function of the expected number of $U$-packets arriving during the HIGH state, where we vary the value of $\alpha$ from 0 to 1. This provides further insight as to the performance of each algorithm as a function of the intensity of unknown packets. We recall that for our choice of parameters, the values of $\alpha$ translate to having the expected number of unknown packets per cycle during the HIGH state vary from 0 to 10.
As one could expect, the performance ratio of \SAM\ and of all versions of \SAO\ degrades as the amount of uncertainty increases.

Finally, we study the intensity of exploring unknown packets, as depicted by the choice of parameter $r$ which
determines whether a cycle is an admittance cycle or not.
The results depicted in Fig.~\ref{fig:r} consider the case of high uncertainty, where $\alpha=1$, that is, all arriving packets are unknown.

Observe first the special case where $r=0$, which represents an extreme case, in which, although all arriving packets are unknown, our algorithms do not explore any new packets, and actually degenerate to a simple FIFO, and therefore exhibit identical performance.
Increasing the admittance probability $r$, however, yields a steady increase in performance, albeit with diminishing returns.
Similar results were obtained also when some of the packets are known, but with smaller marginal benefits. These results coincide with our analytic results, which further validate our algorithmic approach.

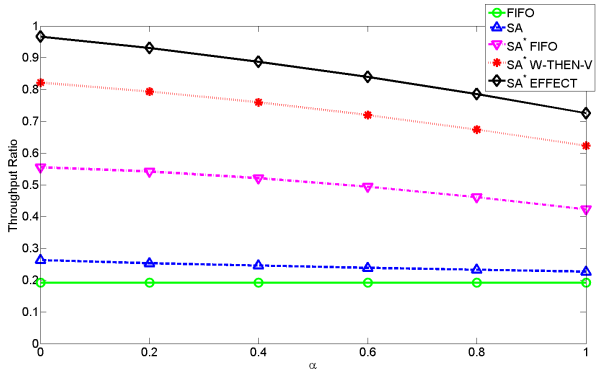
\begin{figure}[t]
\centering
\begin{tikzpicture}
	\begin{axis}[
		legend style={at={(0.5,1.18)},anchor=north,legend columns=-1,font=\scriptsize},
	    label style={font=\scriptsize},
	    tick label style={font=\scriptsize},
		ylabel=Throughput ratio,
		xlabel=$\alpha$,
		ytick={0,0.1,0.2,0.3,0.4,0.5,0.6,0.7,0.8,0.9,1},
		ymin=0,
		ymax=1,
	    ymajorgrids=true,
	    grid style=dashed,
		xtick=data]
	\addplot[color=blue,mark=x] coordinates {
	    (0,0.1924)(0.2,0.1924)(0.4,0.1924)(0.6,0.1924)(0.8,0.1924)(1,0.1924)
	};
	\addlegendentry{FIFO}
	\addplot[color=red,mark=triangle*] coordinates {
		(0,0.2627)(0.2,0.2527)(0.4,0.2455)(0.6,0.2384)(0.8,0.2326)(1,0.2268)
	};
	\addlegendentry{SA}
	\addplot[color=yellow,mark=o] coordinates {
		(0,0.5550)(0.2,0.5417)(0.4,0.5211)(0.6,0.4937)(0.8,0.4607)(1,0.4228)
	};
	\addlegendentry{SA$^*$ FIFO}
	\addplot[color=cyan,mark=*] coordinates {
		(0,0.8213)(0.2,0.7932)(0.4,0.7594)(0.6,0.7194)(0.8,0.6739)(1,0.6231)
	};
	\addlegendentry{SA$^*$ $W$-$V$}
	\addplot[color=black,mark=square*] coordinates {
		(0,0.9663)(0.2,0.9302)(0.4,0.8878)(0.6,0.8393)(0.8,0.7853)(1,0.7251)
	};
	\addlegendentry{SA$^*$ EFFECT}
	\end{axis}
\end{tikzpicture}	
\caption{Effect of expected number of $U$-packets during the HIGH state}
\label{fig:M}
\end{figure}

\begin{figure}[t]
\centering
\begin{tikzpicture}
	\begin{axis}[
		legend style={at={(0.5,1.18)},anchor=north,legend columns=-1,font=\scriptsize},
	    label style={font=\scriptsize},
	    tick label style={font=\scriptsize},
		ylabel=Throughput ratio,
		xlabel=$r$,
		ytick={0,0.1,0.2,0.3,0.4,0.5,0.6,0.7,0.8,0.9,1},
		ymin=0,
		ymax=1,
	    ymajorgrids=true,
	    grid style=dashed,
		xtick=data]
	\addplot[color=blue,mark=x] coordinates {
	    (0,0.1933)(0.2,0.1935)(0.4,0.1944)(0.6,0.1920)(0.8,0.1930)(1,0.1931)
	};
	\addlegendentry{FIFO}
	\addplot[color=red,mark=triangle*] coordinates {
		(0,0.1933)(0.2,0.1974)(0.4,0.2078)(0.6,0.2144)(0.8,0.2202)(1,0.2289)
	};
	\addlegendentry{SA}
	\addplot[color=yellow,mark=o] coordinates {
		(0,0.1933)(0.2,0.2251)(0.4,0.2738)(0.6,0.3242)(0.8,0.3746)(1,0.4260)
	};
	\addlegendentry{SA$^*$ FIFO}
	\addplot[color=cyan,mark=*] coordinates {
    	(0,0.1933)(0.2,0.2582)(0.4,0.4106)(0.6,0.5089)(0.8,0.5781)(1,0.6268)
	};
	\addlegendentry{SA$^*$ $W$-$V$}
	\addplot[color=black,mark=square*] coordinates {
		(0,0.1933)(0.2,0.2642)(0.4,0.4467)(0.6,0.5727)(0.8,0.6653)(1,0.7286)
	};
	\addlegendentry{SA$^*$ EFFECT}
	\end{axis}
\end{tikzpicture}	
\caption{Effect of admittance probability of $U$-packets $r$}
\label{fig:r}
\end{figure}
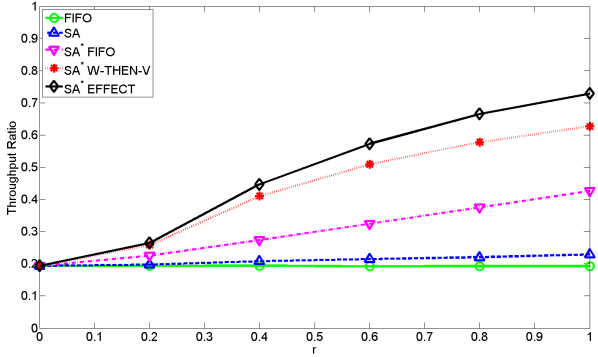

\section{Conclusions and Future Work}
\label{sec:conclusions}

We consider the problem of managing buffers where traffic has unknown characteristics, namely required processing
and profits.
\revchange{We show lower bounds on the competitive ratio of any online algorithm for the problem.}
We define several algorithmic concepts targeted at such settings, and develop several algorithms that
follow our suggested prescription.
\revchange{Our theoretical analysis shows that the competitive ratio of our algorithms is not far from the best competitive ratio any online algorithm can achieve.}
We validate the performance of our algorithms via simulation which further serves to elucidate our design criteria.
Our work can be viewed as a first step in developing fine-grained algorithms handling scenarios of limited knowledge
in networking environments for highly heterogeneous traffic.

Our work gives rise to a multitude of open questions, including:
\begin{inparaenum}[(i)]
\item closing the gap between our lower and upper bound for the problem,
\item applying our proposed approaches to other limited knowledge networking environments, and
\item devising additional algorithmic paradigms for handling limited knowledge in heterogeneous settings.
\end{inparaenum}


\section*{References}
\bibliography{TR_v2}

\appendix
\section{Preliminaries}
We now define some of the notation that will be used throughout the appendix.

For every cycle $t$ and packet type $\alpha$, we denote by $A^{\alpha}(t)$ the number of $\alpha$-packets that arrive in cycle $t$. For instance, $\Ak(t)$ ($\Au(t)$) denotes the number of $K$-packets ($U$-packets) which arrive in cycle $t$. This notation can be combined with the work and profit values of packets. For instance, $\Au_{(w,v)}(t)$ denotes the number of $U$-packets with work $w$ and profit $v$, which arrive in cycle $t$.

Our proofs involve a careful analysis of the expected profit of our algorithms from packets which arrive when it is either in the fill or the flush phase. Therefore, we now turn to define the exact notion of cycles belonging to either phase.
We say that an algorithm is in the flush phase in a specific cycle $t$ if it is in the flush state at the end of the arrival step of cycle $t$. If it's not in the flush phase in cycle $t$, then we say it is in the fill phase in cycle $t$. Denote by $P\mFILL$ and $P\mFLUSH$ the sets of cycles in which our algorithm is in the fill and flush phases, respectively.

For every packet type $\alpha$, we denote by $S_{\alpha}(t)$ the expected profit of the algorithm from
$\alpha$-packets which {\em arrive} in cycle $t$, and by
$S_{\alpha}=\sum_{t}S_{\alpha}(t)$ the overall
expected profit of \alg\ from $\alpha$-packets. We denote by $O_{\alpha}$ the expected profit of some optimal solution, \opt, from $\alpha$-packets. Again, these notations can be combined with previous notations. For instance, $O_{\CsU}(t)$ denotes the overall expected profit of \opt\ from $\CsU$-packets. Furthermore, $O\mFILL_{\CsU}$ denotes the expected profit of \opt\ from $\CsU$-packets which arrive during $P\mFILL$.

\section{Proof Of Theorem \ref{thm:SAMWP}}\label{thm:SAMWP:proof}
Our proof will follow from a series of propositions.
Initially, we aim to prove that \SAM\ successfully transmits every $\CsK$-packet which arrives during the fill phase, by showing that it never drops such a packet once it is accepted to the buffer.

\begin{proposition}
\label{prop:transmits_all_G_K}
\SAM\ successfully transmits every $\CsK$-packet which arrives during the fill phase.
\end{proposition}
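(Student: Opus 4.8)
The plan is to show that once a $\CsK$-packet enters \SAM's buffer, it is never dropped, and that it is eventually transmitted. I would isolate two potential ways a buffered $\CsK$-packet could fail to be transmitted: either it is \emph{pushed out} during some arrival step, or it is never scheduled for processing and remains stuck in the buffer forever. I would argue that neither can happen, relying crucially on the fact that \SAM\ always maintains a $\CsK$-first order among queued packets (lines~\ref{alg:SAM:line:sort_queue_in_arrival} and~\ref{alg:SAM:line:sort_queue_in_processing}).

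First I would rule out push-out. A $\CsK$-packet is dropped only via the tail-drop in line~\ref{alg:SAM:line:drop}, which is executed only when the buffer is full during the fill phase. By the $\CsK$-first ordering, the packet at the tail of the queue is a $\CsK$-packet only if \emph{every} buffered packet is a $\CsK$-packet --- that is, only if the buffer is \hfull. But if the buffer is \hfull\ at the start of the arrival step, \emph{UpdatePhase()} sets the phase to flush, so the algorithm skips the while-loop entirely (lines~\ref{alg:SAM:line:while_begin}--\ref{alg:SAM:line:while_end}) and performs no tail-drop. Hence whenever a tail-drop actually occurs, the tail packet is a non-$\CsK$-packet, so no $\CsK$-packet is ever pushed out.

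Next I would show every buffered $\CsK$-packet is eventually transmitted. Since $\CsK$-packets are never dropped, it suffices to argue the buffer cannot retain a $\CsK$-packet indefinitely. Once the buffer becomes \hfull\ (filled entirely with $\CsK$-packets), the phase switches to flush, during which the algorithm continuously processes queued packets and accepts no new ones for scheduling, so it drains every buffered packet --- in particular every $\CsK$-packet --- before returning to fill. If the buffer never becomes \hfull, I would use work-conservation: in any cycle with a nonempty buffer the HoL packet is processed, and by the $\CsK$-first order the $\CsK$-packets are served before any other packet, so each $\CsK$-packet's remaining work is decremented until it is fully processed and transmitted. The one subtlety is the admittance cycles, in which an admitted $U$-packet is promoted to the HoL ahead of the $\CsK$-packets (lines~\ref{alg:SAM:if_phase_eq_fill}--\ref{alg:SAM:end_if_phase_eq_fill}); however this delays a $\CsK$-packet by at most one parsing cycle per admittance cycle and cannot cause it to be dropped, so progress on $\CsK$-packets is merely slowed, not halted.

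The main obstacle is handling this interaction with admittance cycles cleanly: I must confirm that over-prioritizing a $U$-packet for a single parsing cycle never displaces a $\CsK$-packet from the buffer (it does not, since the promoted $U$-packet is already buffered and processing it frees no slot and drops nothing) and that such interruptions are finite, so that $\CsK$-packets still make guaranteed progress toward completion. Tying these observations together --- no push-out plus guaranteed eventual draining --- yields that \SAM\ successfully transmits every $\CsK$-packet residing in its buffer.
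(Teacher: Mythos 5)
Your proof is correct and follows essentially the same route as the paper's: drops occur only via the tail-drop line, which executes only when the buffer is full but not \hfull, so by the $\CsK$-first ordering the dropped tail packet is never a $\CsK$-packet, and work conservation then guarantees eventual transmission once the (finite) arrival sequence ends. You simply make explicit two steps the paper leaves implicit --- that the sorting forces the tail to be non-$\CsK$ whenever the buffer is not \hfull, and that admittance cycles only delay, never displace, buffered $\CsK$-packets.
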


\begin{proof}

We first note, that any $\CsK$-packet arriving during the fill phase (depicted by the while loop in lines~\ref{alg:SAM:line:while_begin}-\ref{alg:SAM:line:while_end}) is accepted (line~\ref{alg:SAM:accept_Gk}).

Next, we show that \SAM\ never drops a $\CsK$-packet which resides in its buffer.
We consider all cases where \SAM\ drops a packet from its buffer, and prove that it cannot be a $\CsK$-packet.

In line~\ref{alg:SAM:drop_ap}, \SAM\ drops an admitted packet, namely, a picked $U$-packet, and not a $\CsK$-packet.

In line~\ref{alg:SAM:MakeRoom_by_Gk}, \SAM\ performs the \makeroom\ procedure, which may result in dropping the last packet in the buffer. However, as this line dwells within the while loop of lines~\ref{alg:SAM:line:while_begin}-\ref{alg:SAM:line:while_end}, we know that the phase is fill, and therefore there are at most $B-1$ $\CsK$-packets in the buffer. Furthermore, if there are exactly $B-1$ $\CsK$-packets in the buffer, the if-clause in lines~\ref{alg:SAM:if_B_minus_1}-\ref{alg:SAM:if_B_minus_1_endif} assures that there is no admitted packet in the buffer. Hence, if the buffer is full, it contains at least one low-priority packet -- namely, a packet which is not admitted and not a $\CsK$-packet. After sorting the buffer, this low-priority, non-$\CsK$ packet, will be located in the tail of the queue and dropped.

\SAM\ may perform the \makeroom\ procedure also in line~\ref{alg:SAM:MakeRoom_by_ap}, if $\Autp = 1$. In this case, the arriving packet $p$ is the first $U$-packet arriving in this cycle -- and it is not admitted yet. As a result, there is no admitted packet in the buffer. Furthermore, as this line is executed during the fill phase (the while loop of lines~\ref{alg:SAM:line:while_begin}-\ref{alg:SAM:line:while_end}), there are at most $B-1$ $\CsK$-packets in the buffer. Hence, if the buffer is full, it contains at least one low-priority, non-$\CsK$-packet, which is the packet dropped.
\end{proof}

The following lemma shows that the overall number of $\Cs$-packets transmitted by \SAMWP\ is at least a significant fraction of the number of $\Cs$-packets accepted by an optimal policy during a fill phase.

\begin{lemma}\label{Cs_K_WP}
$S_{\Cs} \geq \frac{r}{M} O\mFILL_{\Cs}$.
\end{lemma}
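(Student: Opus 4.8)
The plan is to decompose both sides according to whether the selected-class packets are known or unknown, writing $S_{\Cs}=S_{\CsK}+S_{\CsU}$ and $O\mFILL_{\Cs}=O\mFILL_{\CsK}+O\mFILL_{\CsU}$, and to prove the two bounds $S_{\CsK}\geq O\mFILL_{\CsK}$ and $S_{\CsU}\geq\frac{r}{M}O\mFILL_{\CsU}$ separately. Since $r\leq 1$ and $M\geq 1$ give $\frac{r}{M}\leq 1$, summing these yields $S_{\Cs}\geq O\mFILL_{\CsK}+\frac{r}{M}O\mFILL_{\CsU}\geq\frac{r}{M}\left(O\mFILL_{\CsK}+O\mFILL_{\CsU}\right)=\frac{r}{M}O\mFILL_{\Cs}$, as required. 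I would argue both bounds cycle-by-cycle: for each cycle $t$ I compare \SAM's expected profit from class-$\Cs$ arrivals at $t$ with \opt's profit $O_{\Cs}(t)$ from the same arrivals, weighted by $\Pr[t\in P\mFILL]$. Note that \opt's behaviour is fixed once the input is fixed, while the set $P\mFILL$ is random (determined by \SAM's coins), so $O\mFILL_{\Cs}=\sum_{t}\Pr[t\in P\mFILL]\,O_{\Cs}(t)$, and likewise for the $\CsK$ and $\CsU$ restrictions.

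The known part is the easier half. In any cycle $t$ that \SAM\ spends in the fill phase, the while loop retains every arriving $\CsK$-packet: if the buffer is not full it is accepted greedily, and if the buffer is full it is accepted through the $\CsK$-branch of the \textbf{ElsIf} (dropping the tail). By Proposition~\ref{prop:transmits_all_G_K}, \SAM\ never drops a $\CsK$-packet, so every such packet is eventually transmitted. Hence \SAM's profit from $\CsK$-packets arriving at a fill cycle $t$ equals the total profit of those arrivals, which upper bounds $O_{\CsK}(t)$; weighting by $\Pr[t\in P\mFILL]$ and summing over $t$ gives $S_{\CsK}\geq O\mFILL_{\CsK}$.

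The unknown part is the heart of the argument. Fix a $\CsU$-packet $p$ arriving at cycle $t$, and observe that \SAM\ transmits $p$ whenever three events co-occur: $t$ is a fill cycle, $t$ is an admittance cycle, and $p$ is the $U$-packet selected at $t$. Indeed, on this event $p$ is admitted, pushed to the HoL, and parsed already in its arrival cycle; once parsed its characteristics are revealed, so it becomes a $\CsK$-packet and is thereafter transmitted by Proposition~\ref{prop:transmits_all_G_K}. It remains to lower bound the probability of this joint event, and this is where the main obstacle lies: the indicator $\mathbf{1}[t\in P\mFILL]$ is itself random, depending on \SAM's earlier coin flips, so it cannot be treated as independent of everything. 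The resolution is to condition on the history up to the start of cycle $t$'s arrival step. Given this history the fill/flush status of $t$ is already determined: whether the buffer turns \hfull\ during the arrival step is fixed by the deterministic arrival stream together with the (history-determined) buffer contents, and crucially \emph{not} by the cycle-$t$ admittance coin or reservoir choice, since admitting a $U$-packet merely replaces a non-$\CsK$ tail packet by another non-$\CsK$ packet and thus never increases the count of buffered $\CsK$-packets (the only way to become \hfull). Consequently the fresh admittance coin, which is true with probability $r$ via \emph{DecideAdmittance}, and the uniform reservoir selection among the $\Au(t)\leq M$ arriving $U$-packets (Proposition~\ref{prop:reservoir}) are independent of $\mathbf{1}[t\in P\mFILL]$. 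Thus \SAM\ transmits $p$ with probability at least $\frac{r}{M}\Pr[t\in P\mFILL]$, contributing expected profit at least $\frac{r}{M}\Pr[t\in P\mFILL]\,v(p)$. Summing $v(p)$ over all $\CsU$-packets arriving at $t$ bounds \SAM's expected $\CsU$-profit from cycle $t$ below by $\frac{r}{M}\Pr[t\in P\mFILL]$ times the total available $\CsU$-profit at $t$, which dominates $\frac{r}{M}\Pr[t\in P\mFILL]\,O_{\CsU}(t)$; summing over $t$ yields $S_{\CsU}\geq\frac{r}{M}O\mFILL_{\CsU}$ and completes the proof.
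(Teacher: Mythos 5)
Your proof is correct and follows essentially the same route as the paper's: decompose $S_{\Cs}=S_{\CsK}+S_{\CsU}$, bound the known part via Proposition~\ref{prop:transmits_all_G_K}, bound the unknown part via the admittance probability $r$ and the uniform reservoir selection among at most $M$ unknown arrivals (Proposition~\ref{prop:reservoir}), and combine using $\frac{r}{M}\leq 1$. Your conditioning-on-history step, showing the cycle-$t$ coins are independent of the fill/flush indicator of cycle $t$, is a careful justification of a point the paper's proof takes for granted, but it is a refinement within the same argument rather than a different approach.
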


\begin{proof}
Let $t$ denote a cycle in the fill phase, in which $U$-packets arrive. Then, with probability $r$ \SAMWP\ admits one $U$-packet, denoted $p$. As the algorithm implements reservoir sampling~\cite{Reservoir}, $p$ is picked uniformly at random out of at most $M$ unknown arrivals, and therefore the probability that $p \in \CsU$ is at least $A_{\CsU}(t)/M$.
As $p$ is parsed in the cycle of arrival, in the subsequent cycle it is known.
By Proposition~\ref{prop:transmits_all_G_K}, if $p$ is a $\CsK$-packet, then \SAMWP\ will eventually transmit $p$.
Recalling that $X\W_{i^*}$ and $X\P_{j^*}$ denote the ranges of the work and profit values within the selected work and profit class $C_{(i^*,j^*)}$ (see Sec.~\ref{sec:RC&S_concrete}), we conclude that
\begin{equation} \label{Eq:S_G_U_t}
S_{\CsU} (t) \geq \frac{r}{M} \sum_{ w \in X\W_{i^*}, v \in X\P_{j^*} }
[v \cdot A_{(w,v)}\U(t)].
\end{equation}

Summing Eq.~\ref{Eq:S_G_U_t} over all the cycles within the fill phase,

\begin{equation}\label{Eq:S_G_U}
S_{\CsU} \geq \frac{r}{M} \sum_{t \in P\mFILL} \sum_{ w \in X\W_{i^*}, v \in X\P_{j^*} }
[v \cdot A_{(w,v)}\U(t)]
\geq \frac{r}{M} O\mFILL_{\CsU}.
\end{equation}

In addition, by Proposition~\ref{prop:transmits_all_G_K}, $S_{\CsK} \geq O\mFILL_{\CsK}$. Therefore
\begin{equation}\label{Eq:S_G}
S_{\Cs} = S_{\CsK} + S_{\CsU}
\geq \frac{r}{M} (O\mFILL_{\CsK} + O\mFILL_{\CsU}) = \frac{r}{M} O\mFILL_{\Cs}.
\end{equation}
\end{proof}

We are now in a position to prove Theorem \ref{thm:SAMWP}.
\begin{proof}[Proof of Theorem \ref{thm:SAMWP}]

Every class $C_{(i,j)}$ is the selected class with probability $\frac{1}{\mW \cdot \mP}$. Using Lemma \ref{Cs_K_WP}
we therefore have for all
$i \in \set {1,2, \dots, \mW}$ and $j \in \set {1,2, \dots, \mP}$,
$S_{(i,j)} \geq \frac{r}{M \cdot \mW \cdot \mP} O\mFILL_{(i,j)}$.

Summing over all the classes, we obtain that the expected performance of our algorithm satisfies
\begin{equation} \label{eq:case0WV}
\sum_{i=1}^{\mW}
\sum_{j=1}^{\mP} S_{(i,j)} \geq
\frac{r}{M \cdot \mW \cdot \mP} \sum_{i=1}^{\mW}
\sum_{j=1}^{\mP}
O\mFILL_{(i,j)}.
\end{equation}

If \SAMWP\ is never \hfull\ during an arrival sequence, then $O_{(i,j)} = O\mFILL_{(i,j)}$ and therefore, by Eq.~\ref{eq:case0WV} the ratio between the performance of \opt\ and the expected throughput of \SAMWP\ is at most $\frac{M}{r} \cdot \mW \cdot \mP$, as required.

Assume next that \SAMWP\ becomes \hfull\ during an input sequence.
In such a case we compare the overall throughput due to packets {\em transmitted} by \SAMWP\ until the first cycle in which its buffer is empty again, and the profit obtained by \opt\ due to packets {\em accepted} by \opt\ during the same interval. We note that our analysis would also apply to subsequent such intervals, namely, until the subsequent cycle in which \SAMWP\ is empty again.

We note that in case \SAMWP\ becomes \hfull, \SAMWP\ holds in its buffer exactly $B$ $\Cs$-packets, and all these packets are transmitted by the time \SAMWP\ is empty again.
By the definition of $\deltaW$ in Section~\ref{sec:RC&S}, the maximal work which \SAMWP\ dedicates to any of these packets is at most $\deltaW$ times higher than the \emph {minimal} work which \opt\ dedicates to any $\Cs$-packet.
As a result, during the flush phase, in which \SAM\ handles $B$ $\Cs$-packets, \opt\ can handle at most
$\deltaW B + B$ $\Cs$-packets. Furthermore, by the definition of $\deltaP$ in Section~\ref{sec:RC&S}, the maximal profit of \opt\ from any $\Cs$-packet is at most $\deltaP$ higher than the \emph{minimal} profit of \SAMWP\ from any $\Cs$-packet. Combining the above reasoning implies that

\begin{equation}\label{Eq:delta_pi}
\frac{O\mFLUSH_{\Cs} }{S_{\Cs}} \leq \frac{\deltaW B + B}{B} \cdot \deltaP = (\deltaW + 1)\deltaP.
\end{equation}

As every class $C_{(i,j)}$ is the selected class w.p. $\frac{1}{\mW \cdot \mP}$, we have
$$\forall
i \in \set {1 \dots \mW},
j \in \set {1 \dots \mP},
S_{(i,j)} \geq
\frac{1}{(\deltaW + 1)\deltaP \cdot \mW \cdot \mP} O\mFLUSH_{(i,j)}.$$

Summing over all the classes we obtain
\begin{equation}\label{eq:case1WV}
\sum_{i=1}^{\mW}
\sum_{j=1}^{\mP}
S_{(i,j)} \geq
\frac{1}{(\deltaW + 1)\deltaP \cdot \mW \cdot \mP}
\sum_{i=1}^{\mW}
\sum_{j=1}^{\mP}
O\mFLUSH_{(i,j)}.
\end{equation}

Combining Equations~\ref{eq:case0WV} and~\ref{eq:case1WV} implies that the competitive ratio of \SAMWP\ is at most
\begin{equation}\label{Eq:case01WV}
\frac {
\sum_{i=1}^{\mW}
\sum_{j=1}^{\mP}
\left[ O\mFILL_{(i,j)} + O\mFLUSH_{(i,j)} \right]}
{
\sum_{i=1}^{\mW}
\sum_{j=1}^{\mP}
S_{(i,j)}}
\leq \left[ \frac{M}{r} + (\deltaW + 1)\deltaP \right] \cdot \mW \cdot \mP,
\end{equation}
which completes the proof.
\end{proof}

In the special case where all packets are known upon arrival, we obtain the following upper bound on the competitive ratio of \SAMWP:
\begin{corollary}
When $M=0$, \SAMWP\ is $O(\deltaW \cdot \deltaP \cdot \mW \cdot \mP)$-competitive.
\end{corollary}

\begin{proof}
We follow the proof of Theorem~\ref{thm:SAMWP}, and carefully check the required changes.

When all packets are known, Proposition~\ref{prop:transmits_all_G_K} remains essentially intact.
Furthermore, we have $S_{\Cs} = S_{\CsK} \geq O\mFILL_{\CsK} = O\mFILL_{\Cs}$, which replaces Lemma~\ref{Cs_K_WP}. Accordingly, Eq.~\ref{eq:case0WV} is modified to
\begin{equation}\label{eq:case0WV_M0}
\sum_{i=1}^{\mW}
\sum_{j=1}^{\mP} S_{(i,j)} \geq
\frac{1}{\mW \cdot \mP} \sum_{i=1}^{\mW}
\sum_{j=1}^{\mP}
O\mFILL_{(i,j)}.
\end{equation}

Equation~\ref{eq:case1WV} remains intact, as in deriving it we use the classify and randomly select scheme, independently of $M$. Combining Equations~\ref{eq:case0WV_M0} and~\ref{eq:case1WV} implies that when all packets are known, the competitive ratio of \SAMWP\ is at most
\begin{equation}\label{Eq:case01WV_M0}
\frac {
\sum_{i=1}^{\mW}
\sum_{j=1}^{\mP}
\left[O\mFILL_{(i,j)} + O\mFLUSH_{(i,j)}\right]}
{
\sum_{i=1}^{\mW}
\sum_{j=1}^{\mP}
S_{(i,j)}}
\leq \left[1 + (\deltaW + 1)\deltaP \right] \cdot \mW \cdot \mP,
\end{equation}
which completes the proof.
\end{proof}

\section{Proof Of Theorem \ref{thm:SAOWP}}\label{thm:SAOWP:proof}

\begin{proof}
\revchange{
We first consider the effect of uniformly at random selecting a class closure, instead of selecting a specific class.
First, note that the proof of Lemma \ref{Cs_K_WP} also directly applies to \SAOWP, implying that $S^*_{\Cs^*} \geq \frac{r}{M} O\mFILL_{\Cs}$.
Furthermore, the arguments used in the proof of Theorem~\ref{thm:SAMWP} also apply to \SAOWP, and in particular \SAOWP\ satisfies Equation~\ref{Eq:case01WV}, where we substitute in the denominator $S_{(i,j)}$ by $S^*_{(i,j)}$.
}

Consider next the affect of performing fill during flush.
In \SAOWP\ we accept packets also during the flush phase, but we never process any of these packets before all packets contributing to the algorithm being \hfull\ are transmitted, i.e., they are never processed before the flush phase is complete.
We enumerate the fill phases and the subsequent flush phases as follows: $P_{\FILL_1}, P_{\FLUSH_1}, P_{\FILL_2}, P_{\FLUSH_2}, \dots, P_{\FILL_n}, P_{\FLUSH_n}$, where $n \geq 1$. It should be noted that each such phase corresponds to a series of disjoint time intervals defined by the first cycle of the sequence of phases.
We further denote the $P_{\FLUSH_0}$ phase as an empty set of cycles, and in  case that the sequence ends by a fill phase, we also let $P_{\FLUSH_n}$ denote an empty set of cycles.
\revchange{
Similarly, we further define $P^*_{\FILL_i}, P^*_{\FLUSH_i}$, for the appropriate values of $i$, to denote the fill and flush phases corresponding to \SAOWP.
}

Denote the profit accrued by \SAMWP\ and \opt\ from packets which arrive during the $i^{th}$ fill phase by $S^{(P_{\FILL_i})}$ and $O^{(P_{\FILL_i})}$ respectively. Similarly, denote the profit of \SAMWP\ and \opt\ obtained from packets which arrive during the $i^{th}$ flush phase by $S^{(P_{\FLUSH_i})}$ and $O^{(P_{\FLUSH_i})}$, respectively.
Similarly, we let $S^{* (P^*_{\FILL_i})}$ and $S^{* (P^*_{\FLUSH_i})}$ indicate the profit of \SAOWP\ obtained from packets which arrive during its $i^{th}$ fill and flush phase, respectively.

Using this notation, we recall that, by the analysis of \SAMWP\ presented in Theorem~\ref{thm:SAMWP}
\begin{equation} \label{eq:P_old}
O^{(P_{\FILL_i})} + O^{(P_{\FLUSH_i})} \leq \left[ \frac{M}{r} + (\deltaW + 1) \deltaP \right]
\mW  \cdot \mP \cdot S^{(P_{\FILL_i})}
\end{equation}
for every $i=1,\ldots,n$.

This induces an implicit mapping $\phi$ of
the units of profit obtained from
$\Cs$-packets accepted by \opt\ during $P_{\FILL_i} \cup P_{\FLUSH_i}$ to the units of profit obtained from $\Cs$-packets accepted by \SAMWP\ during $P_{\FILL_i}$ (either known, or unknown that were parsed), such that every unit of profit obtained by \SAMWP\ has at most $\left[\frac{M}{r} + (\deltaW + 1) \deltaP \right] \mW  \cdot \mP$ units of profit mapped to it.

A key observation is noting that the image of mapping $\phi$ is essentially the profit attained from the set of $\Cs$-packets contributing to the algorithm being \hfull\ at the end of the corresponding fill phase.

As \SAOWP\ may accept packets during flush, in the beginning of the subsequent fill phase the buffer of \SAOWP\ may not be empty. In particular, there could be $\Cs$-packets accepted during the recent flush phase that are stored in the buffer.
However, none of these packets have any \opt\ packets mapped to them.
It follows that these packets can contribute to \SAOWP\ becoming \hfull\ in the new fill phase, and any profit implicitly mapped to the profit of these packets by $\phi$ would correspond to packets arriving during the new fill phase, or its subsequent
flush phase.
The implicit mapping is depicted in Fig.~\ref{fig:fill_during_flush}, along with the difference between the mapping arising from the behavior of \SAMWP\ (visualized above the time axis), and the mapping arising from the behavior of \SAOWP (visualized below the time axis). Note that the fill and flush phases of both algorithms need not be synchronized, since \SAOWP\ can potentially become \hfull\ ``faster'' than \SAMWP.

\begin{figure*}[t]
\centering
\includegraphics[width=1.0\textwidth]{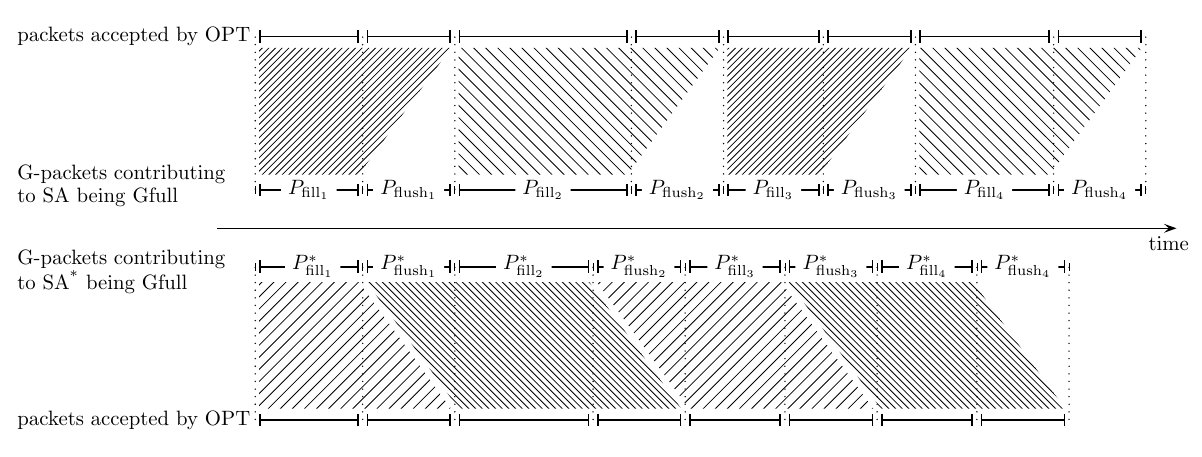}
\caption{Visualization of the mappings induced by the analysis of \SAMWP\ and \SAOWP, for the first 4 fill and flush phases. The fill and flush phases of \SAMWP\ are denoted $P_{\FILL_i}$ and $P_{\FLUSH_i}$, respectively, whereas the fill and flush phases of \SAOWP\ are denoted $P^*_{\FILL_i}$ and $P^*_{\FLUSH_i}$, respectively.
The top part shows the mapping of profit corresponding to packets accepted by \opt\ along time, to the profit corresponding to $\Cs$-packets accepted by \SAMWP\ during the fill phase (since \SAMWP\ does not accept any packets during the flush phase).
The bottom part shows the induced mapping of profit obtained by packets accepted by \opt\ along time to the profit of $\Cs$-packets accepted by \SAOWP\ during both the preceding flush phase, and the current fill phase.}
\label{fig:fill_during_flush}
\end{figure*}

It follows that Eq.~\ref{eq:P_old} now translates to
\begin{equation} \label{eq:P_new}
\begin{array}{l}
O^{(P^*_{\FILL_i})} + O^{(P^*_{\FLUSH_i})} \leq \\
\quad \quad \quad \left[ \frac{M}{r} + (\deltaW + 1) \deltaP \right] \mW  \cdot \mP \cdot \left[S^{* (P^* _{\FLUSH_{i-1}})} + S^{* (P^*_{\FILL_i})}\right]
\end{array}
\end{equation}
for every $i=1,\ldots,n$.
Summing over all $i=1,\ldots,n$, we obtain that the competitive ratio guarantee for \SAOWP\ is the same as that for \SAMWP.

Lastly, the analysis of \SAMWP\ does not assume any specific scheduling rule to be applied, as long as the $\Cs\K$-first order rule is maintained.
Thus, our competitive ratio guarantee is independent of the specific ordering within the set of $\CsK$-packets, as well as within the set of non-$\CsK$-packets.
\end{proof}

\section{Running Example of \SAM}\label{App:run_Example}
\revchange{
Figure~\ref{fig:algorithm-example} exemplifies a running of \SAM\ equipped with a 3-slots buffer. Each packet is represented by a square. If it is a known $(w, v)$-packet, then $(w,v)$ (namely, its work, profit values, resp.) appears within the square representing the packet. If the packet is unknown, however, the (unknown) work and profit values do not appear, and the packet's color is dark gray.

Known packets which belong to the selected class ($\CsK$-packets) are marked in light gray.
The figure assumes that the (randomly-chosen) selected class is the class of packets with work- and profit- values within the range [3, 4]. Recall, that this range refers to the
characteristics of a packet \emph{upon arrival}.
For instance, a $(3, 3)$-packet always belongs to the selected class, although after being processed its residual work decreases, and it becomes a $(2, 3)$-packet, and later a $(1, 3)$-packet, and so on.

Each cycle begins with the transmission step, in which a fully processed packet, if such exists, leaves the queue.
In our example there is no packet transmitted since we focus our attention on handling arrivals and determining priorities which are the core components of our algorithm.
This step is followed by the arrival step,
where arriving packets are handled by the algorithm. Finally,
the cycle ends with a processing step, where the head-of-line (HoL) packet is processed. This packet is emphasized by an extra internal square.
The state of the queue at the end of each cycle is depicted by a light-gray background.
At each cycle the algorithm tosses a coin, and assigns the cycle as an \emph{admittance cycle} w.p. $r$. In this example, we assume that cycles $1,3,5, 6$ are admittance cycles.
We now turn to explain the scenario depicted in Figure~\ref{fig:algorithm-example} cycle by cycle.
\paragraph{\textbf{$t=0$}} Begin with an empty buffer.
\paragraph{\textbf{$t=1$}} A known $(4,4)$-packet arrives. As both its work- and profit- values belong to the ranges [3,4], it is a $\CsK$-packet, and therefore it is retained by the algorithm (recall that $\CsK$-packets are never dropped during the fill phase, as shown in Proposition~\ref{prop:transmits_all_G_K}).

Next, a $U$-packet arrives. As this is an admittance cycle, this $U$-packet is \emph{admitted}, that is, accepted into the buffer, and assigned to the HoL.
Since this is the last packet to arrive in this cycle, and being the HoL-packet, this packet is processed in the processing step. We refer to this packet as being \emph{parsed} in this cycle, as this is the first processing cycle of this packet.

After parsing, the characteristics of
the HoL packet become known: it is now a known $(1,8)$-packet.
Namely, when it arrived, it was a $(2, 8)$-packet which has received one cycle of processing.
By these values, this packet does not belong to the selected class. Therefore, it is pushed down to the buffer's tail. Instead, the $\CsK$-packet, with values $(4,4)$ is assigned to be the HoL packet.
It should be noted that although the parsed $(1,8)$-packet is superior to any $\CsK$-packet currently in the buffer (since it carries a profit value of 8 while requiring just one more cycle of work), \SAM\ still prefers $\CsK$-packets over this packet. We note that the improved \SAO\ algorithm would re-assign such a packet to be a $\CsK$-packet by considering the selective class closure.

\paragraph{\textbf{$t=2$}} No packets arrive. The HoL-packet, $(4,4)$, is processed, and becomes a $(3,4)$-packet.

\paragraph{\textbf{$t=3$}}
This is an admittance cycle. Therefore, the first arriving $U$-packet is admitted. In particular, this cycle well exemplifies the buffer's ordering: at top-priority is the admitted packet; at a second priority is the $\CsK$-packet, $(3,4)$; the remaining packet in the buffer, $(1,8)$, is of a lowest priority.

When a second $U$-packet arrives, \SAM\ tosses a coin, and replaces the previously-admitted packet with the new arriving $U$-packet w.p. $1/2$. When a third $U$-packet arrives, \SAM\ tosses a coin again, and replaces the previously-admitted packet with the new arriving $U$-packet w.p. $1/3$.

In the processing step, \SAM\ parses the admitted packet, unraveling it as a $(3,3)$-packet. Namely, upon arrival its characteristics were $(4,3)$, ascribing it to the selected class. As there already exists another $\CsK$-packet in the buffer (the $(3,4)$-packet) \SAM\ breaks the tie between the two $\CsK$-packets in its buffer by FIFO order.
We note that the improved \SAO\ algorithm would transition to the flush phase at this point, since it would have been full of $\CsK$-packets.

\paragraph{\textbf{$t=4$}} First, we have an arriving known $(2,5)$-packet. By its characteristics, it is not a $\CsK$-packet. Therefore, it is assigned the lowest priority. In particular, as the buffer is full, this packet is discarded. Next, a $U$-packet arrives. However, as this is a non-admittance cycle, the $U$-packet is discarded as well. Finally, during the processing step, the HoL packet is processed, decreasing its remaining work to 2.

\paragraph{\textbf{$t=5$}} We have a single arriving $U$-packet. As it is an admitted cycle, this $U$-packet is admitted, hence, accepted and parsed. In order to make room for this admitted packet, the $(1,8)$-packet in the tail is pushed-out and dropped.
 After parsing, the $U$-packet is uncovered as a $(1, 2)$-packet. Namely, upon arrival it was a $(2,2)$-packet. By these characteristics, this packet does not belong to the selected class, and therefore has the lowest priority, and downgraded to the tail.

\paragraph{\textbf{$t=6$}} This is an admittance cycle. Therefore the first arriving $U$-packet is admitted, pushing-out from the buffer the $(1,2)$-packet, which was in the tail. When a second $U$-packet arrives, it replaces the previously-admitted packet w.p. $  1/2$. Then, a $(2,7)$-packet arrives. By its characteristics, it is neither an admitted packet (as it is a $K$-packet), nor does it belong to the selected class. As a result, the $(2,7)$-packet is assigned the lowest priority, and is therefore discarded.
The last arrival in this cycle is a known $(4,4)$-packet. By its characteristics, it is a $\CsK$-packet. Since the buffer already contains $B-1=2$ $\CsK$-packets, the $U$-packet at the HoL is dropped, and the newly arriving $CsK$-packet is accepted to the queue (see lines~\ref{alg:SAM:if_is_Gk}-\ref{alg:SAM:accept_Gk} in Algorithm~\ref{alg:SAM}).
The queue therefore becomes \hfull, i.e., the buffer is full with $\CsK$-packets. \SAM\ then switches to the \emph{flush} state, and it will merely process all the packets in its buffer in a run-to-completion manner and transmit all the fully-processed packets, until the buffer is empty again.

\begin{figure}
    \centering
    \begin{subfigure}[b]{1.0\textwidth}
        \includegraphics[width=\textwidth]{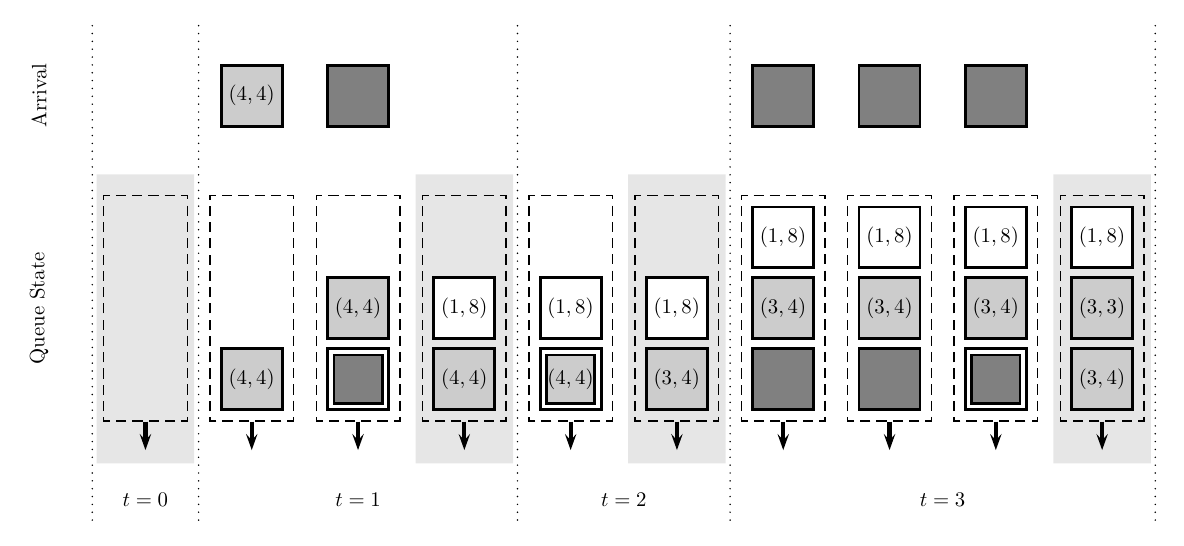}
        \label{fig:algorithm-example_part1}
    \end{subfigure}
    \begin{subfigure}[b]{1.0\textwidth}
        \includegraphics[width=\textwidth]{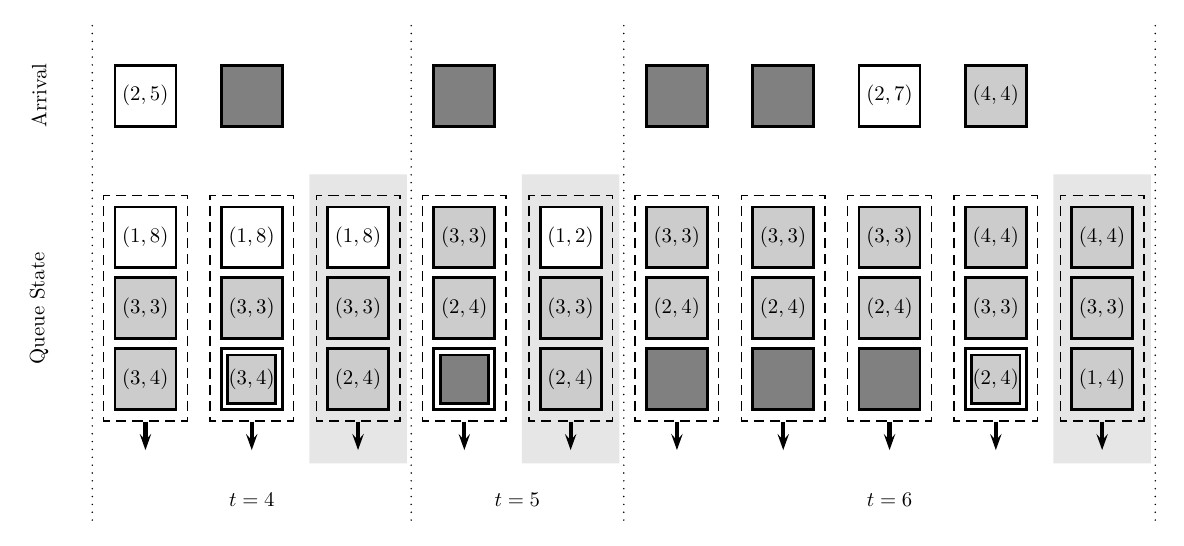}
        \label{fig:algorithm-example_part1}
    \end{subfigure}
    \caption{Running Example of \SAM, equipped with a 3-slots buffer. Each known packet is labeled $(w(p),v(p))$, where $w$ is the remaining work and $v$ is the profit. $\CsK$-packets are marked by light gray. $U$-packets are colored by dark gray. \\
    Each cycle begins with a transmission step, in which a fully-processed packet, if such exists, is transmitted. Next comes the arrival step, where arriving packets are handled by the algorithm one by one. For each arriving packet, the buffer below the arrival depicts the state of the buffer after handling the packet's arrival.
 The packet in the queue's head-of-line (HoL) at the end of the arrival step is emphasized by an extra, internal, square. This packet is the one processed in the processing step. The state of the buffer at the end of each cycle is highlighted with light-gray background.}

    \label{fig:algorithm-example}
\end{figure}
}

\end{document}